\date{}
\newtheorem{lemma}{Lemma}
\newtheorem{proof of lemma}{proof of Lemma}
\newtheorem{theo}{Theorem}
\newcommand{\vast}{\bBigg@{3.5}}
\newcommand{\Vast}{\bBigg@{4.25}}
\begin{document}

\title{On the Capacity of Discrete-Time Laguerre Channel}
\author{Hossein Esmaeili, and Jawad A. Salehi \emph{Fellow, IEEE} \\

\thanks{The authors are with the Optical Networks Research Laboratory (ONRL), Electrical Engineering Department, Sharif University of Technology, Tehran, Iran (e-mails: hossein\underline{ }esmaeili@ee.sharif.edu and jasalehi@sharif.edu). }
}
\maketitle

\begin{abstract}
In this paper, new upper and lower bounds are proposed for the capacity of discrete-time Laguerre channel. Laguerre behavior is used to model various types of optical systems and networks such as optical amplifiers, short distance visible light communication systems with direct detection and coherent code division multiple access (CDMA) networks. Bounds are derived for short distance visible light communication systems and coherent CDMA networks. These bounds are separated in three main cases: when both average and peak power constraints are imposed, when peak power constraint is inactive and when only peak power constraint is active.

\end{abstract}

\IEEEpeerreviewmaketitle

\section{Introduction}

\IEEEPARstart{O}{ptical}
 intensity modulation with direct detection (IM/DD) is one of the most prevalent methods to communicate through optical channels and networks due to its simplicity in design and implementation. In these channels, information is modulated onto intensity domain and thus, all symbols have non-negative values. To find the capacity of such channels, first one should obtain the statistical expression of the channel.  There has already been presented several channel statistics model for IM/DD channels such as Poisson and Gaussian intensity channels and also the capacity of these channels are investigated. In [1]-[5], upper and lower bound for discrete-time Poisson channel are proposed under different conditions. Upper and lower bounds for the capacity of the Guassian optical intensity channels are also evaluated in [6]-[10] using various methods such as sphere packing, duality approach and maxentropic method. Moser presents the capacity results of optical intensity channels with input-dependent gaussian noise under peak and average power conditions [11].

In this paper, Laguerre behavior is considered as the conditional statistics of the channel. Discrete-time Laguerre channel is an appropriate model for low-power optical communications when received signals contain monochromatic (single frequency) plus narrow band gaussian lightwaves which are related to input data and noise term respectively [12].
This model can be used for various optical communications such as free-space optical (FSO) communications, optical amplifiers and intersatellite laser links [13]. It is noteworthy that Laguerre behavior is an improved version of the poisson channel when the background noise factor is notable and it has been shown that the Poisson channel turns to the Laguerre one when the input of the Poisson channel is corrupted by a narrow band gaussian noise [12].

In discrete-time Laguerre channels, input data is a monochromatic lightwave and coding scheme is applied onto the  intensity of it, therefore the input data must be non-negative. It is important to note that since the input data is single frequency, optical input power has an direct effect on the photons rate which arrive to the receiver. But the energy of each incident photon is identical to all the others and is independent of the optical input power  and only the number of emitted photons are varied by variation of input power. At receiver, arrival signals are a stream of counted photons and thus, the outputs of such channels, unlike inputs, can give only non-negative discrete values.

In this work, first, upper and lower bounds for the capacity of the discrete-time Laguerre channel with independent noise are calculated when the average and peak constraints are imposed to the input power. Afterward, it is shown that optical coherent CDMA network statistics can be modeled as a Laguerre channel with input-dependent noise factor and then, some achievable rates are proposed for such a channel under different input constraints.

The rest of this paper is organized as follows. Section II gives system model of the optical channel whose statistics can be modeled as a Laruerre distribution, In Section III, our main results are proposed and In Section IV, derivation of the lower and upper bounds are presented.
\section{System Model}
Considering many applications for short distance visible light communication systems and optical coherent CDMA networks, determining capacity of these channel models is the main key to specify maximum achievable data rates.

In this paper, a memoryless discrete-time channel is assumed in which its output and input are denoted by $Y$ and $X$ respectively in a way that $X \in \mathbb{R}^+$ and $Y \in \mathbb{Z}^+$. Note that, while input $X$ takes values from $\mathbb{R}^+$, the output $Y$ being a nonnegative integer is resulted from photo detector properties [12]. For coherent CDMA, short optical free space systems with direct detection and optical amplifiers, it can be shown that the channel statistics, as formulated in (\ref{laguerre1}), is Laguerre with mean $x+\lambda_0$ where $\lambda_0$ represents noise average power.

\begin{equation}
\begin{split}
W(y|x)&=\frac{e^{\frac{-x}{\lambda}}}{1+\lambda}\left(\frac{\lambda}{1+\lambda} \right)^y \left( \sum_{j=0}^{\infty} \left( \frac{x}{\lambda (1+\lambda)} \right)^j \frac{(y+j)!}{(j!)^2 y!} \right)
\end{split}
\label{laguerre1}
\end{equation}
supposing that average power is a crucial factor, there is a limitation on maximum transmitted average power namely average power constraint as in (\ref{avecon}). In addition, considering that optical fibers and transmitters operate linearly in specific range of input power, another constraint -peak power constraint- can be imposed as (\ref{peakcon}),
\begin{equation}
\begin{split}
E[X] \leq \mathcal{E}
\end{split}
\label{avecon}
\end{equation}
\begin{equation}
\begin{split}
\Pr [X \geq A]=0,
\end{split}
\label{peakcon}
\end{equation}
where $A$ and $\mathcal{E}$ denote peak power and average power constraints respectively.

Here we define a parameter $\alpha$ as below
\begin{equation}
\begin{split}
\alpha \triangleq \frac{\mathcal{E}}{A},
\end{split}
\label{alpha}
\end{equation}
where $\alpha$ takes values from $0^+$ to $1$. It is obvious that $\alpha=1$ represents the case in which average power constraint is inactive. On the other hand, for $\alpha \ll 1$, only average power constraint is taken into account. The channel capacity can be formulated as
\begin{equation}
\begin{split}
\mathcal{C}(A,\mathcal{E})=\stackrel[Q(\cdot)]{}{\sup} I(X;Y),
\end{split}
\label{alpha}
\end{equation}
where $I(X;Y)$ represents the mutual information between $X$ and $Y$, where supremum is taken such that $\Pr [X \geq A]=0$, $E[X] \leq \mathcal{E}$ and $\Pr [X \leq A]=0$.

Similarly, $C(A)$ and $C(\mathcal{E})$ represent the cases in which peak power and average power constraints is taken into account singly.

Moreover, in coherent CDMA systems, average noise power depends on average input power. while there is not such dependency between these two parameters in short free space optical systems. So our results fall into two main aforementioned categories. Therefore in this paper, first set of simulations gives upper and lower bounds for short free space optical systems and in second set of simulations, lower bounds for coherent CDMA systems is achieved. Note that because of dependency between average noise and input power in latter systems, upper bound is not practical.
\begin{figure}[t]
\centering
\includegraphics[scale=.35]{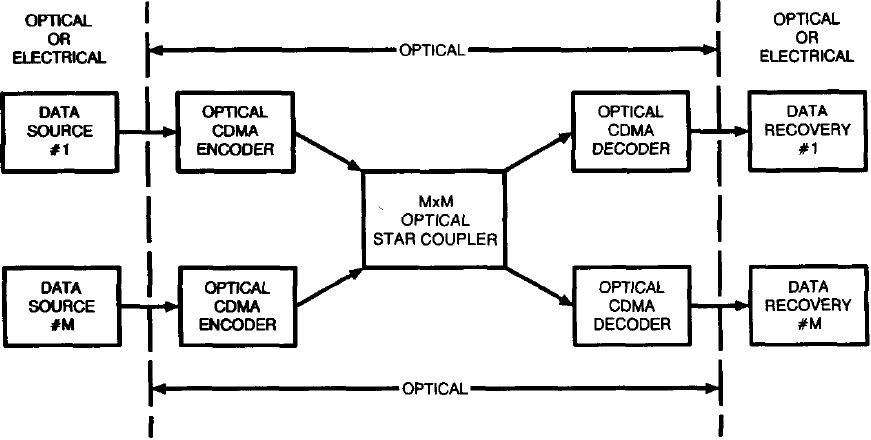}
\caption{A typical optical coherent CDMA system with an all-optical encoder/decoder.}
\label{first}
\end{figure}
In optical coherent CDMA systems, first, users encode their transmitting signal and then transmit it on the channel. 
After that, all users' data are added together and at the receiver end, each user multiplies its receiving signal 
to its corresponding code and reproduces transmitting data. As a result, our channel model is interfering. 
Fig.\ref{first} depicts a typical coherent optical CDMA network with $M$ users. As it is shown in this figure, first, 
each optical pulse is transferred to frequency domain by grating and then the transformed optical pulse is directed to 
the encoder phase mask and finally, the encoded pulse is transformed to time domain again. Since each user has a 
distinct encoder/decoder phase mask, the corresponding output of $i$th user's pulse from $j$th decoder is a low 
intensity pulse and has a pseudonoise behavior. In general, it can be demonstrated that when $i$th user sends pulse 
with intensity $I_i(t_0)=I_i$ at time $t_0$, its corresponding intensity at the $j$th output side is
\begin{equation}
\ O_{j,i}(t)= \left\{
    \begin{array}{ll}
        \ \frac{I_i}{M} &  j=i \\
        \ \frac{I_i}{MN_0} & \mbox{other,}
    \end{array}
\right.
\end{equation}
where $N_0$ is the number of mask chips (also known as code length). The proof is similar to [17].

 To obtain channel statistics, we need to specify channel distribution. It can be shown that for a coherent CDMA channel with $M$ users, channel distribution can be formulated as follows
\begin{equation}
\begin{split}
P(&y_j|  \forall  i  \in  \{ 1,2,... , M\} ,I_i(t_0)=I_i)\\
&=\frac{1}{\sum _{i=1, i \neq j}^{M-1}O_{j,i}}e^{-\frac{(y_j+ O_{j,j})}{\sum _{i=1, i \neq j}^{M-1}O_{j,i}}} I_0\left(\frac{2\sqrt{y_j O_{j,j}}}{\sum _{i=1, i \neq j}^{M-1}O_{j,i}}\right)
\label{qashsum}
\end{split}
\end{equation}
where $y_j$ is $j$th output data and $I_0(.)$ denotes modified Bessel function of first kind and zeroth order.

Assuming that $M$ is large enough, by using weak law of large numbers (WLLN), we can write
\begin{equation}
\frac{\sum _{i=1, i \neq j}^{M-1}O_{j,i}}{(M-1)}=\mathbb{E} [O_{j,i}]=\frac{\mathbb{E} [I_i]}{MN_0}
\label{qashqash}
\end{equation}
Then equation (\ref{qashsum}) can be simplified to
\begin{equation}
P(y_{j}|I_j,M-1 )=\frac{1}{{\beta\eta}}e^{-\frac{(y_{j}+\frac{I_j}{M})}{{\beta\eta}}}I_0\left(\frac{2\sqrt{y_j\frac{I_j}{M}}}{{\beta\eta}}\right)
\label{qashqash}
\end{equation}
where $\eta=\mathbb{E}[I_i(t)]$ and where $\beta=\frac{(M-1)}{MN_0}$. For more information about coherent CDMA  optical systems see [17].

Note that above formulations are derived without considering photo detector at the receiver end. If we take photo detector properties into account, (\ref{qashqash}) is transformed to
\begin{equation}
\begin{split}
&W(y|x)\\
&=\frac{e^{\frac{-x}{M\beta\eta}}}{1+{\beta\eta}}\left(\frac{\beta\eta}{1+{\beta\eta}} \right)^y \left( \sum_{j=0}^{\infty} \left( \frac{x}{{M\beta\eta} (1+{\beta\eta})} \right)^j \frac{(y+j)!}{(j!)^2 y!} \right)
\end{split}
\label{laguerre}
\end{equation}
%

In this paper, the derivation of lower bounds is based on the data processing inequality [14]. The main idea in derivation of lower bounds relies on using an specific arbitrary distribution for input. In fact, by canceling optimization on input distribution, a lower bound can be obtained. By choosing $Q(\cdot)$ distribution for input, we have [15]
\begin{equation}
\begin{split}
\mathcal{C}(A,\mathcal{E})=\stackrel[Q(\cdot)]{}{\sup} I(X;Y) \geq I(X;Y),
\end{split}
\label{alpha}
\end{equation}

The derivation of the upper bound relies on the upper bound introduced in [16] as an upper bound for Discrete-Time Poisson channel. First we show that our Laguerre channel is degraded version of the Poisson channel without dark current, i.e. $\lambda_0=0$. Therefore, the upper bound that is obtained for Poisson channel can also be used for Laguerre channel.

\section{Results}

The results in this paper are proposed separately for short distance visible light communication systems and optical coherent CDMA networks in two subsections.

\subsection{short distance visible light communication systems}

In this subsection, we discuss lower and upper bounds on channel capacity under three cases: First case is in presence of average power and peak power constraints and $\alpha \in (0,1/3)$, second case is in presence of average power and peak power constraints and $\alpha \in [1/3,1]$ and third case is in presence of only average power constraint.

\subsubsection{Bounds on channel capacity in presence of average power and peak power constraints and $\alpha \in (0,1/3)$}
\begin{theo}
For short distance visible light communication systems, if $\alpha \in (0,\frac{1}{3})$, then channel capacity can be lower-bounded as
\begin{equation}
\begin{split}
\mathcal{C}(A,\mathcal{E}) \geq &  \frac{1}{2}\log(A) -(1-\alpha)\mu -\log \left( \frac{1}{2}-\alpha \mu \right) \\
&-e^{\mu}\left(\frac{1}{2}-\alpha \mu \right)\vast[2\sqrt{\frac{12\lambda (\lambda +1)+1}{12A(2\lambda +1)}}\\
&\ \ \ \ \times \arctan\left(\sqrt{\frac{12A(2\lambda +1)}{12\lambda (\lambda +1)+1}} \right)\\
&+\log\left(1+\frac{12\lambda (\lambda +1)+1}{12A(2\lambda +1)}\right) \vast] \! - \! \frac{1}{2}\log(2\lambda+1)\\
&+\log(1+\frac{1+\lambda}{\mathcal{E}}) + (\mathcal{E}+\lambda)\log(1+\frac{1}{\mathcal{E} +\lambda}) \\
&-\frac{1}{2}\log 2\pi e -1,
\label{gagaga}
\end{split}
\end{equation}
and upper-bounded as
\begin{equation}
\begin{split}
\mathcal{C}(A,\alpha A) \leq & \frac{1}{2} \log A - (1-\alpha)\mu-\log\left( \frac{1}{2}-\alpha\mu \right)\\
&-\frac{1}{2}\log 2\pi e+o_A(1),
\end{split}
\end{equation}
\end{theo}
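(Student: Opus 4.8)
The plan is to prove the two inequalities separately, following the two mechanisms announced after \eqref{alpha}.

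\textbf{Upper bound.} First I would establish the degradation claim: the Laguerre law $W(\cdot\mid x)$ is an input-\emph{independent} garbling of a noiseless Poisson output. Concretely, if $N\sim\mathrm{Poisson}(x)$, then $Y=\sum_{i=1}^{N}K_i+L$ has exactly the law $W(\cdot\mid x)$ once $K_1,K_2,\dots$ are i.i.d.\ with an $x$-free distribution (a mixture of an atom at $0$ and a geometric law on the positive integers) and $L$ is an independent geometric variable of mean $\lambda$; one verifies this at the level of probability generating functions, where the Laguerre p.g.f.\ factors as $\exp\!\big(\tfrac{x(z-1)}{1-\lambda(z-1)}\big)\cdot\tfrac{1}{1-\lambda(z-1)}$, a compound-Poisson factor times a thermal factor. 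Since $X\to N\to Y$ is then a Markov chain with the \emph{same} peak and average constraints on $X$, the data-processing inequality gives $\mathcal{C}(A,\mathcal{E})\le\mathcal{C}_{\mathrm{Poisson},\,\lambda_0=0}(A,\mathcal{E})$; specializing the Poisson upper bound of [16] to $(A,\mathcal{E})=(A,\alpha A)$ and matching its free parameter to $\mu$ yields the claimed expression. The only things to check here are that the garbling is an honest stochastic kernel and that the $o_A(1)$ remainder of [16] survives the specialization; neither is problematic.

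\textbf{Lower bound: setup.} I would fix $\mu$ in its admissible range and use the input density proportional to $x^{-1/2}e^{-\mu x/A}$ on $(0,A]$. The power $x^{-1/2}$ is the shape that variance-stabilizes the channel --- whose conditional variance is $\mathrm{Var}(Y\mid X=x)=(2\lambda+1)x+\lambda(\lambda+1)$ --- and the exponential tilt is unavoidable precisely when $\alpha<1/3$, since the untilted density $x^{-1/2}$ has mean $A/3$ and therefore violates $E[X]\le\mathcal{E}=\alpha A$; checking the average constraint for the tilted density, with the elementary estimate $1+u\le e^{u}$, is what produces the admissibility condition recorded as $\tfrac12-\alpha\mu>0$ and forces $\mu>0$. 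Dropping the supremum over input laws gives $\mathcal{C}(A,\mathcal{E})\ge I(X;Y)$ for this input, and I would evaluate it through $I(X;Y)=h(Y+U)-\mathbb{E}_X[H(Y\mid X)]$ with a uniform dither $U\sim\mathrm{Unif}[-\tfrac12,\tfrac12]$ (legitimate because $Y$ is integer-valued, so $Y=\mathrm{round}(Y+U)$ almost surely).

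\textbf{Lower bound: execution and main obstacle.} The conditional term is controlled by the maximum-entropy bound for integer-valued random variables, $H(N)\le\tfrac12\log\!\big(2\pi e(\mathrm{Var}(N)+\tfrac1{12})\big)$, which turns $(2\lambda+1)x+\lambda(\lambda+1)$ into $(2\lambda+1)x+\lambda(\lambda+1)+\tfrac1{12}$ and hence produces the $12\lambda(\lambda+1)+1$ visible in \eqref{gagaga}. Applying it, factoring $(2\lambda+1)$ out of the logarithm (whence $-\tfrac12\log(2\lambda+1)$), averaging the remaining $\tfrac12\log\!\big(x+\tfrac{12\lambda(\lambda+1)+1}{12(2\lambda+1)}\big)$ against the chosen input, bounding the tilt by $e^{-\mu x/A}\le1$ so that the normalizer surfaces as the prefactor $e^{\mu}(\tfrac12-\alpha\mu)$, and finally substituting $u=\sqrt x$ and integrating by parts once, reproduces exactly the bracketed $\arctan$- and $\log(1+\cdot)$-terms; the remaining $\tfrac12\log A$ emerges from the balance between $h(Y+U)$, of order $\log A$, and $\mathbb{E}_X[H(Y\mid X)]$, of order $\tfrac12\log A$. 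For the output term I would lower-bound $h(Y+U)$ using that $Y$ has mean $\mathcal{E}+\lambda$ and comparing against a geometric reference of that mean (whose entropy is $(\mathcal{E}+\lambda)\log(1+\tfrac1{\mathcal{E}+\lambda})+\log(\mathcal{E}+\lambda+1)$), which, together with the closed-form differential entropy of the tilted input, accounts for $-(1-\alpha)\mu-\log(\tfrac12-\alpha\mu)+\log(1+\tfrac{1+\lambda}{\mathcal{E}})+(\mathcal{E}+\lambda)\log(1+\tfrac1{\mathcal{E}+\lambda})-1$; collecting the pieces gives \eqref{gagaga}. I expect this last step to be the main obstacle: producing a lower bound on $h(Y+U)$ that is explicit enough to evaluate in closed form yet sharp enough not to collapse --- via a premature Jensen step --- to the single coarse quantity $\tfrac12\log\!\big(2\pi e((2\lambda+1)\mathcal{E}+\lambda(\lambda+1))\big)$, which would destroy the $\arctan$-level refinement.
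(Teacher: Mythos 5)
Your proposal matches the paper's proof in all essentials: the same degradation-to-Poisson argument via the factorization of the probability generating function for the upper bound (the paper's $\Psi_{\mathrm{Laguerre}}(z)=\Psi_{\mathrm{Poisson}}(\Psi_{v_1}(z))\,\Psi_{v_2}(z)$ is exactly your compound-Poisson-times-thermal factorization), and for the lower bound the same tilted input density $\propto x^{-1/2}e^{-\mu x/A}$, the same maximum-entropy bound $H(Y\mid X=x)\le\tfrac12\log\bigl(2\pi e\bigl((2\lambda+1)x+\lambda(\lambda+1)+\tfrac1{12}\bigr)\bigr)$, and the same lower bound on the output entropy obtained by comparing the chosen input against an exponential reference whose Laguerre output is geometric and invoking the data-processing inequality for relative entropy. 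Your dithered formulation $h(Y+U)$ is identical to the paper's direct use of $H(Y)$ for integer-valued $Y$, so the two routes coincide.
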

where $o_A(1)$ denoted a function that tends to zero as its argument tends to infinity.
\subsubsection{Bounds on channel capacity in presence of average power and peak power constraints and $\alpha \in [1/3,1]$}
\begin{theo}
For short distance visible light communication systems, if $\alpha \in [\frac{1}{3},1]$, then channel capacity can be lower-bounded as
\begin{equation}
\begin{split}
\mathcal{C}(A,\mathcal{E}) \geq & \frac{1}{2}\log A+\log(1+\frac{3(1+\lambda)}{A})-1-\frac{1}{2}\log(\frac{\pi e}{2})\\
&+(\frac{A}{3}+\lambda)\log(1+\frac{3}{A+3\lambda})-\frac{1}{2}\log(2\lambda+1)\\
&-\sqrt{\frac{12\lambda (\lambda +1)+1}{12A(2\lambda +1)}} \arctan\left(\sqrt{\frac{12A(2\lambda +1)}{12\lambda (\lambda +1)+1}} \right)\\
&-\frac{1}{2}\log\left(1+\frac{12\lambda (\lambda +1)+1}{12A(2\lambda +1)}\right)
\end{split}
\label{agagag}
\end{equation}
and upper-bounded as
\begin{equation}
\begin{split}
\mathcal{C}(A,\alpha A) \leq \frac{1}{2}\log A - \frac{1}{2}\log(\frac{\pi e}{2})+o_A(1)
\end{split}
\end{equation}
\end{theo}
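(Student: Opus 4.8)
The plan is to follow the same two-track strategy the paper has already set up: a lower bound by choosing an explicit input distribution $Q(\cdot)$ and estimating $I(X;Y)$ from below via the data-processing inequality, and an upper bound by invoking the degradedness of the Laguerre channel with respect to the dark-current-free Poisson channel and then quoting the Poisson upper bound of [16]. First, for the lower bound, I would write $I(X;Y) = h(Y') - h(Y'\mid X)$ for a suitably ``dithered'' continuous version $Y'$ of the discrete output (so that differential entropy is well defined), bound $h(Y')$ from below using the maximum-entropy principle under whatever moment is controlled by the chosen $Q$, and bound $h(Y'\mid X)$ from above by the entropy of the conditional law in \eqref{laguerre}. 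Since $\alpha \geq 1/3$, the natural choice is an input supported on $[0,A]$ whose mean is pinned at $A/3 \le \mathcal E$ — e.g. a distribution with density proportional to $1/\sqrt{x}$ (or a truncated/shifted exponential), which is exactly what produces the $\tfrac12\log A$ leading term, the $\log\!\big(1+\tfrac{3(1+\lambda)}{A}\big)$ and $(\tfrac{A}{3}+\lambda)\log(1+\tfrac{3}{A+3\lambda})$ correction terms (these have the algebraic form of an expected log of the Laguerre mean $x+\lambda$ under that input), and the $-\tfrac12\log(\pi e/2)$ constant (the differential-entropy cost of the $1/\sqrt{x}$-type density on $[0,A]$).

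The conditional-entropy term $h(Y'\mid X)$ is where the remaining summands come from: I would bound the variance of $Y$ given $X=x$ for the Laguerre law \eqref{laguerre} (it is of the form $a x + b$ with $a,b$ expressed through $\lambda$), apply the Gaussian maximum-entropy bound $h(Y'\mid X=x)\le \tfrac12\log\!\big(2\pi e\,\mathrm{Var}[Y'\mid X=x]\big)$, then take the expectation over $X\sim Q$ and use Jensen/concavity of $\log$. The $-\tfrac12\log(2\lambda+1)$ term is the $\log$ of the per-photon noise scaling, and the two terms on the last two lines — the $\arctan$ term and $-\tfrac12\log(1+\frac{12\lambda(\lambda+1)+1}{12A(2\lambda+1)})$ — arise from integrating $\log(1 + \text{const}/x)$-type expressions against the $1/\sqrt{x}$ input density over $[0,A]$, since $\int_0^A \frac{1}{\sqrt x}\log(1+\tfrac{c}{x})\,dx$ evaluates exactly to a combination of an $\arctan$ and a $\log$ of the same ratio (this is the $12\lambda(\lambda+1)+1$ over $12A(2\lambda+1)$ quantity). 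So the lower-bound proof reduces to: (i) fix $Q$, (ii) compute/lower-bound $h(Y')$, (iii) compute/upper-bound $\mathbb{E}_X h(Y'\mid X)$ via the variance bound and the explicit integrals, (iv) collect terms.

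For the upper bound, I would first establish the claimed degradedness: show that if $X$ is passed through a Poisson channel with mean $x$ (no dark current, $\lambda_0=0$) to produce an intermediate variable, and that variable is then passed through a fixed (input-independent) further channel, the overall law is exactly the Laguerre law \eqref{laguerre} — the narrow-band Gaussian corruption picture from [12] makes this a Markov chain $X \to (\text{Poisson output}) \to Y$, so $I(X;Y) \le I(X;\text{Poisson output})$. Then I would quote the Poisson upper bound from [16] under peak constraint $A$ (and average constraint $\alpha A$), which has leading behavior $\tfrac12\log A$, and absorb the $\lambda$- and lower-order-dependent pieces into $o_A(1)$; the $-\tfrac12\log(\pi e/2)$ constant is the constant appearing in that Poisson bound specialized to the regime $\alpha\ge 1/3$ (where the capacity-achieving Poisson input is the one with density $\propto 1/\sqrt{x}$ and mean $A/3$, matching the lower-bound input, which is why the two bounds have the same first two terms). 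The main obstacle I anticipate is not the upper bound but the lower bound's step (iii): getting the variance of the Laguerre output in closed form as an affine function of $x$ and then evaluating $\int_0^A x^{-1/2}\log(1+c/x)\,dx$ cleanly enough that every constant lines up with the stated expression — in particular verifying that the argument $\frac{12\lambda(\lambda+1)+1}{12A(2\lambda+1)}$ is exactly what the variance-plus-dithering bookkeeping produces (the $12$'s strongly suggest a $\mathrm{Var}+\tfrac{1}{12}$ uniform-dither correction is being carried through), and confirming the sign and coefficient on the $\arctan$ term after integration by parts.
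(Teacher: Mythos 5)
Your plan for the conditional-entropy term, the choice of the $1/\sqrt{x}$ input density on $[0,A]$ with mean pinned at $A/3$, the origin of the $\arctan$/$\log$ pair from evaluating $\int_0^A x^{-1/2}\log(1+c/x)\,dx$ with $c=\frac{12\lambda(\lambda+1)+1}{12(2\lambda+1)}$ (including your correct guess that the $12$'s come from a $\mathrm{Var}+\tfrac{1}{12}$ discrete max-entropy correction, the paper's Lemma~1), and the entire upper-bound argument (Laguerre as a degraded dark-current-free Poisson channel followed by an input-independent stage plus independent Bose--Einstein noise, then data processing and the Poisson upper bound) all match the paper's derivation.

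There is, however, a genuine gap in your treatment of the output entropy. You propose to ``bound $h(Y')$ from below using the maximum-entropy principle,'' but the maximum-entropy principle only yields \emph{upper} bounds on entropy; it cannot supply the lower bound you need, and dithering the output does not change this. The paper instead lower-bounds the discrete entropy $H(Y)$ via the data-processing inequality for relative entropy: for the chosen input law $Q$ with mean $\eta$ and the exponential law $Q_E$ of the same mean, $D(Q\Vert Q_E)\ge D(R\Vert R_G)$, where $R_G$ is the explicitly computed geometric (Bose--Einstein) output of $Q_E$ with mean $\eta+\lambda$ (Appendix). Expanding both divergences gives $H(Y)\ge h(X)+\log\bigl(1+\tfrac{1+\lambda}{\eta}\bigr)+(\eta+\lambda)\log\bigl(1+\tfrac{1}{\eta+\lambda}\bigr)-1$, and setting $\eta=A/3$ produces exactly the $\log\bigl(1+\tfrac{3(1+\lambda)}{A}\bigr)$ and $\bigl(\tfrac{A}{3}+\lambda\bigr)\log\bigl(1+\tfrac{3}{A+3\lambda}\bigr)$ terms of the theorem. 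You instead attribute those terms to ``an expected log of the Laguerre mean'' arising in the conditional-entropy computation; that attribution is incorrect, and without the relative-entropy comparison step your lower-bound derivation cannot be completed as written.
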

Note that in this case average power constraint is inactive, since for the distribution that maximize $h(X)-\frac{1}{2}E[\log X]$ when $\alpha \in [\frac{1}{3},1]$ we will have $E[X]=\frac{A}{3}$, therefore users use less than the allowed average power. Also, it is noteworthy to mention that lower bound (\ref{agagag}) can be obtained from lower bound (\ref{gagaga}) by tending $\alpha$ to $\frac{1}{3}$.
\subsubsection{Bounds on channel capacity in presence of only average power constraint}
\begin{theo}
For short distance visible light communication systems, if only average power constraint is imposed, then channel can be lower-bounded as
\begin{equation}
\begin{split}
\mathcal{C}(\mathcal{E}) \geq& \frac{1}{2}\log \mathcal{E}-\sqrt{\frac{12\pi\lambda (\lambda +1)+\pi}{24\mathcal{E}(2\lambda +1)}}+\log (1+\frac{1+\lambda}{\mathcal{E}})\\
&+(\mathcal{E}+\lambda)\log (1+\frac{1}{\mathcal{E}+\lambda})-\frac{1}{2}\log(2\lambda+1)-1
\end{split}
\end{equation}
and upper-bounded as
\begin{equation}
\begin{split}
\mathcal{C}(\mathcal{E}) \leq \frac{1}{2}\log \mathcal{E}+o_{\mathcal{E}}(1)
\end{split}
\label{agha}
\end{equation}
\end{theo}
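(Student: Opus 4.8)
The plan is to treat this as the $\alpha \to 0$ limiting behavior of the machinery developed for Theorems 1 and 2, since with only an average power constraint the peak constraint $A$ disappears and $\mu$, $\alpha$ drop out of the expressions. For the \emph{lower bound}, I would start from the data processing inequality $I(X;Y) \ge I(X; \tilde Y)$ together with the fact, established in the excerpt, that the Laguerre channel is a degraded version of the dark-current-free Poisson channel; more directly, I would pick an explicit input distribution $Q(\cdot)$ on $\mathbb{R}^+$ that is admissible under $E[X] \le \mathcal{E}$ and compute (or bound) $I(X;Y) = h(Y) - h(Y\mid X)$. The natural choice, matching the structure of the bound, is an exponential-type density on $[0,\infty)$ with mean of order $\mathcal{E}$: this makes $h(X)$ easy, contributes the $\log(1+\tfrac{1+\lambda}{\mathcal{E}})$ and $(\mathcal{E}+\lambda)\log(1+\tfrac{1}{\mathcal{E}+\lambda})$ terms through the entropy of the exponential and a Jensen/convexity estimate on $E[\log(X+\lambda_0)]$, and leaves the $-\tfrac12\log(2\lambda+1)$ factor as the ``noise'' penalty from passing $X$ through the Laguerre kernel.

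The key chain of steps I would carry out, in order: (i) write $I(X;Y) \ge h(X) - \tfrac12 E[\log X] - (\text{const})$ using the entropy-power / continuous-relaxation argument that converts the discrete output entropy $h(Y)$ into $\tfrac12\log$ of a conditional variance (this is where the $\tfrac12\log\mathcal{E}$ and the $-1$, and the $\frac12\log(2\lambda+1)$ scaling come from, the variance of the Laguerre output being essentially $(2\lambda+1)x + O(\lambda^2)$); (ii) lower-bound $h(X)$ for the chosen $Q$ by the differential entropy of the exponential distribution, producing the two logarithmic ``$1+\cdot/\mathcal{E}$'' terms; (iii) upper-bound $\tfrac12 E[\log X]$ — equivalently bound $E[\log(1+\tfrac{\text{something}}{X})]$ — by Jensen's inequality, which is exactly what yields the $\sqrt{\tfrac{12\pi\lambda(\lambda+1)+\pi}{24\mathcal{E}(2\lambda+1)}}$ correction term (an $\arctan$-type integral against the exponential density, then bounded by its small-argument behavior, i.e.\ by $\sqrt{\pi/(4\cdot\text{mean})}$ up to the $(2\lambda+1)$ and $12\lambda(\lambda+1)+1$ constants inherited from the Laguerre variance); (iv) collect terms. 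For the \emph{upper bound}, I would invoke the Poisson-degradation reduction from the excerpt so that the Discrete-Time Poisson upper bound of [16] applies, and then observe that under an average-only constraint that bound reads $\tfrac12\log\mathcal{E} + o_{\mathcal{E}}(1)$: the leading term is forced by the $\tfrac12\log(\text{power})$ scaling of any such intensity channel, and every $A$-dependent term in Theorems 1–2 (the $\mu$, $\alpha$, $\arctan$, and logarithmic pieces) is $O(1)$ or vanishing, hence absorbed into $o_{\mathcal{E}}(1)$ as $\mathcal{E}\to\infty$.

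The main obstacle I anticipate is step (i): making the passage from the genuinely \emph{discrete} output $Y \in \mathbb{Z}^+$ to a differential-entropy estimate rigorous, i.e.\ controlling the gap between $H(Y)$ and $\tfrac12\log\!\big(2\pi e\,\mathrm{Var}(Y\mid X)\big)$ uniformly enough that the error is $o_{\mathcal{E}}(1)$ (for the upper bound) and a clean explicit constant (for the lower bound). This requires a careful handling of the Laguerre conditional law — its mean $x+\lambda$, its variance, and a maximum-entropy argument over integer-valued random variables with prescribed mean — together with the tail behavior of the Bessel-type series in (\ref{laguerre1}). The computation of $E[\log(X+\text{const})]$ under the chosen input is routine once the input density is fixed, and the Jensen step producing the $\arctan$/$\sqrt{\cdot}$ term is a standard estimate; it is the entropy-discretization bookkeeping, and checking that the chosen $Q(\cdot)$ indeed satisfies $E[X]\le\mathcal{E}$ with the right constant so that the stated coefficients line up, that will demand the most care.
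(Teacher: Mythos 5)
Your overall architecture is right at both ends: a fixed input law plus entropy bookkeeping for the lower bound, and the degradation-to-Poisson reduction (so that the known discrete-time Poisson upper bound applies and gives $\tfrac12\log\mathcal{E}+o_{\mathcal{E}}(1)$) for the upper bound. But there is a genuine gap in your step (i), and you have in fact flagged it yourself as the ``main obstacle'': you need a \emph{lower} bound on the discrete output entropy $H(Y)$, and the variance-based maximum-entropy estimate $H\le\tfrac12\log\bigl(2\pi e(\mathrm{Var}+\tfrac1{12})\bigr)$ only delivers inequalities in the \emph{upper}-bounding direction. That estimate is exactly what the paper uses for $H(Y|X)$ (Lemma~1, via $\mathrm{Var}(Y|X=x)=x(1+2\lambda)+\lambda(1+\lambda)$), not for $H(Y)$. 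The paper closes the gap with a device you do not propose: the data-processing inequality for relative entropy, $D(Q\|Q_E)\ge D(R\|R_G)$, where $Q_E$ is the exponential law with mean $\mathcal{E}$ and $R_G$ is its image under the Laguerre kernel, shown in the Appendix to be geometric (Bose--Einstein) with mean $\mathcal{E}+\lambda$. Expanding both divergences yields, exactly and with no discretization error to control, $H(Y)\ge h(X)+\log(1+\tfrac{1+\lambda}{\mathcal{E}})+(\mathcal{E}+\lambda)\log(1+\tfrac{1}{\mathcal{E}+\lambda})-1$; this is where those two logarithmic terms and the $-1$ in the theorem actually originate, not from ``the entropy of the exponential'' or a Jensen estimate on $E[\log(X+\lambda_0)]$.

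Two smaller points. First, the input density is not a plain exponential: to make the constants line up you must take the maximizer of $h(X)-\tfrac12E[\log X]$ under $E[X]\le\mathcal{E}$, namely $Q'(x)=\tfrac{1}{\sqrt{2\pi\mathcal{E}x}}\,e^{-x/(2\mathcal{E})}$, which gives $h(X)-\tfrac12E[\log X]=\tfrac12\log(2\pi e\mathcal{E})$ and cancels the $-\tfrac12\log 2\pi e$ from Lemma~1 to leave the stated $\tfrac12\log\mathcal{E}$; a plain exponential produces a strictly smaller constant. Second, the $\sqrt{\cdot}$ correction cannot come from Jensen applied to $E[\log(1+c/X)]$ (concavity points the wrong way, and $E[1/X]$ diverges for this density); it comes from directly bounding the integral of $\log\bigl(1+\tfrac{12\lambda(\lambda+1)+1}{12(2\lambda+1)x}\bigr)$ against the $x^{-1/2}$ density. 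None of this affects your upper-bound argument, which matches the paper's Lemmas~2 and~3.
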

Lower bound (\ref{agha}) can also be obtained from lower bound (\ref{gagaga}) when $\mathcal{E}$ is kept fix and $\alpha$ tends to zero.
\subsection{optical coherent CDMA networks}
As mentioned before, in contrast to short distance visible light communication, in coherent optical CDMA networks the average noise power directly depends on the average power of input signals. One evident consequence of this dependency is that, by increasing the authorized average power for input signals, noise effect increases too, and it might cause the left hand side (LHS) of lower bound (\ref{gagaga}) to decrease instead of increasing. Therefore, the optimum $\alpha$ that maximizes LHS of (\ref{gagaga}), which is denoted by $\alpha^*$, might be different from one that found in previous section.  In other words, the optimum average power that maximizes LHS of (\ref{gagaga}) may differ from maximum authorized average power $\mathcal{E}$ and as a result, $\alpha^* \leq \frac{\mathcal{E}}{A}=\alpha$ . It is also possible that optimum $\alpha$ depends on maximum peak power constraint $A$. In order to find $\alpha ^*$ we should solve following equation
\begin{equation}
\begin{split}
\frac{\partial\left({\mathcal{C}(A,\mathcal{E})_l}\right)}{\partial{\alpha}}=0 \Rightarrow
\frac{\partial\left({\mathcal{C}(A,\mathcal{E})_l}\right)}{\partial{\mu}}=0
\end{split}
\label{hopeso}
\end{equation}
Equation (\ref{hopeso}) can be solved numerically. In the case in which there is not $\alpha$ such that $\frac{ \partial \left( \mathcal{C} (A,\mathcal{E})_l \right)}{{\partial \mu}}=0$ we set $\alpha ^* =\frac{1}{3}$. Fig. \ref{FigN} depicts $\alpha^*$ versus $A$ for some different values of $\beta$ and $M$.
\begin{figure}[t]
\centering
\includegraphics[scale=.5]{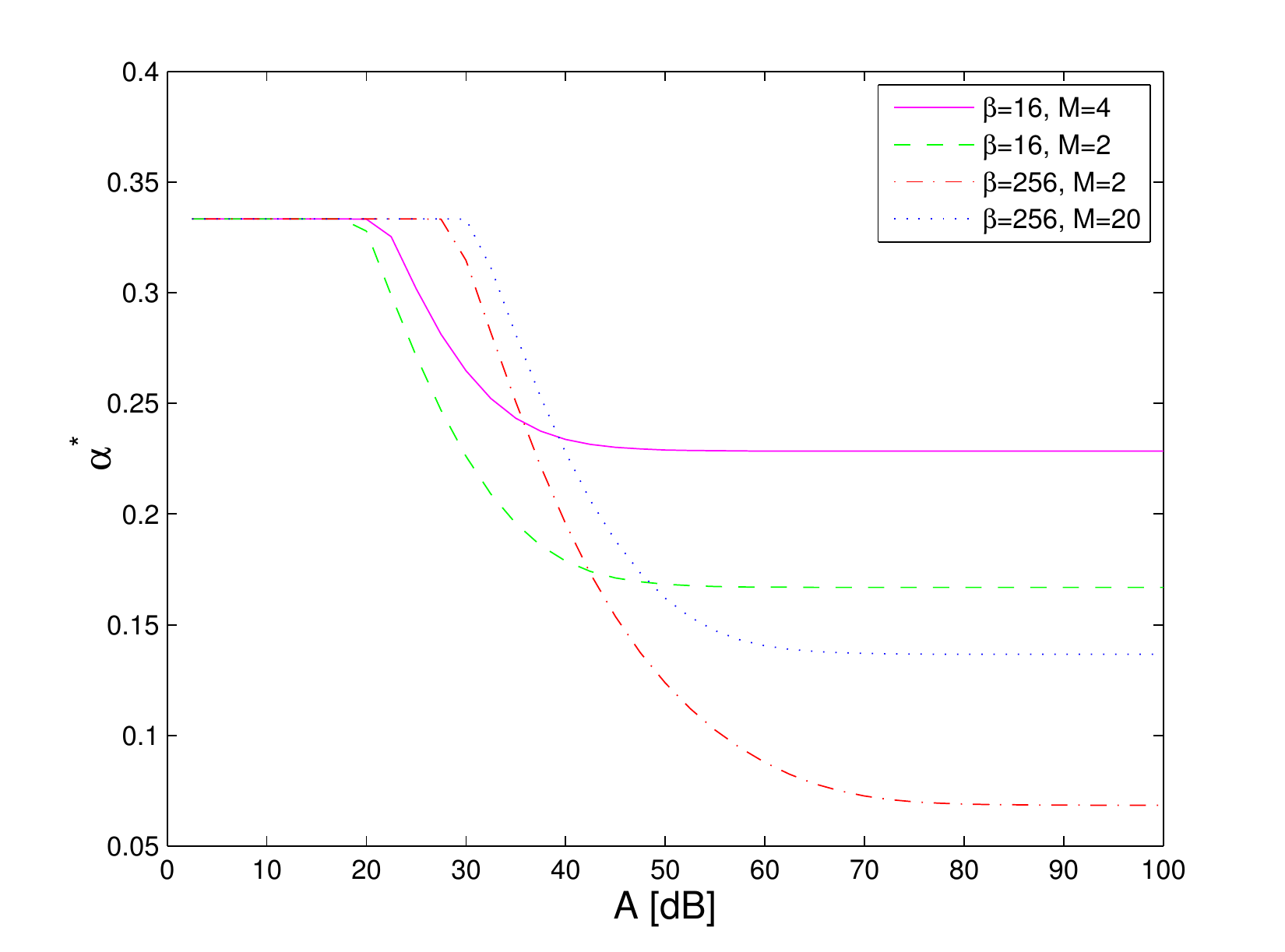}
\caption{Bound on the capacity of short distance visible light communication systems when both peak and average power constraints are active.}
\label{FigN}
\end{figure}
However, one can propose a lower bound for coherent optical CDMA networks similarly to the one proposed for short 
distance visible light communication by adding maximization of the $\alpha$ to it. Thus, in the rest of this section, 
a lower bound for two cases of only peak power constraint and both average and peak power constraints are presented, 
then, by using resulted lower bound, a lower bound is proposed for the case that only average power constraint is 
imposed, and finally, this section ends with a brief discussion about sum-capacity in optical coherent CDMA networks.
\subsubsection{Bounds on channel capacity in presence of average power and peak power constraints and $\alpha \in (0,1/3]$}
\begin{theo}
For coherent CDMA optical systems, if $\alpha \in (0,\frac{1}{3}]$, then channel capacity can be lower-bounded as
\begin{equation}
\begin{split}
\mathcal{C}(A,\mathcal{E}) \! \geq & \! \stackrel[0< \alpha \leq \frac{\mathcal{E}}{A} ]{}{\max} \! \vast\{ \frac{1}{2}\log(A) -(1-\alpha)\mu -\log \left( \frac{1}{2}-\alpha \mu \right) \\
&-e^{\mu}\left(\frac{1}{2}-\alpha \mu \right)\vast[2\sqrt{\frac{M\! \left( 12\beta \mathcal{E} (\beta \mathcal{E} +1)+1 \right)}{12A(2\beta \mathcal{E} +1)}}\\
&\ \ \ \ \times \arctan\left(\sqrt{\frac{12A(2\beta \mathcal{E} +1)}{M\! \left( 12\beta \mathcal{E} (\beta \mathcal{E} +1)+1 \right)}} \right)\\
&+\log\left(1+{\frac{M\! \left( 12\beta \mathcal{E} (\beta \mathcal{E} +1)+1 \right)}{12A(2\beta \mathcal{E} +1)}}\right) \vast]\\
&+\log(\frac{1}{M}+\frac{1+{\beta\mathcal{E}}}{\mathcal{E}})-\frac{1}{2}\log 2\pi e -1  \\
&-\frac{1}{2}\log(\frac{2{\beta\mathcal{E}} \! + \! 1}{M})+ (\frac{\mathcal{E}}{M} \! + \! {\beta\mathcal{E}})\log(1 \!+\! \frac{1}{\frac{\mathcal{E}}{M} \! + \! {\beta\mathcal{E}}}) \! \vast\}
\end{split}
\end{equation}

\end{theo}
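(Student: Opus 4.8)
The plan is to recognize the coherent-CDMA channel law (\ref{laguerre}) as a reparametrization of the canonical Laguerre law (\ref{laguerre1}) and then re-run the achievability argument that establishes Theorem 1; since (unlike Theorems 1--3) the present statement asserts only a lower bound, the degraded-Poisson comparison of Section II is not needed here. Matching (\ref{laguerre}) against (\ref{laguerre1}) term by term shows that the CDMA channel is exactly the Laguerre channel in which the noise parameter $\lambda$ is replaced by $\beta\eta$ and the intensity $x$ is replaced by $x/M$, where $\eta=\mathbb{E}[I_i]$ is the common average input power. The one genuinely new feature, flagged in the text preceding the theorem, is that $\eta$ is tied to the average power actually in use, so raising the signal power raises the interference power $\beta\eta$ and the ``spend the whole average budget'' optimum of the non-interfering case is lost. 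I would therefore keep the mean of the chosen input (equivalently the distribution parameter $\mu$, which is in bijection with $\alpha$) free throughout and take the maximum over $0<\alpha\le\mathcal{E}/A$ only at the end --- this is the source of the outer $\max$, and of the reduction of stationarity from $\partial/\partial\alpha$ to $\partial/\partial\mu$ in (\ref{hopeso}) --- while for the interference-dependent quantities I would use the worst-case value $\eta\le\mathcal{E}$, which is why those terms carry $\beta\mathcal{E}$; over-estimating the interference only weakens the bound, which is harmless.

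With the channel in canonical form the second step is the machinery of Theorems 1--3. By definition $\mathcal{C}(A,\mathcal{E})\ge I(X;Y)$ for any admissible input $Q$, and since $X$ is continuous one has the exact identity $I(X;Y)=h(X)-h(X\mid Y)$ with $h(X\mid Y)=\sum_{y}\Pr(Y=y)\,h(X\mid Y=y)$. Each posterior is bounded by the Gaussian maximum-entropy inequality $h(X\mid Y=y)\le\tfrac12\log\!\big(2\pi e\,\mathrm{Var}(X\mid Y=y)\big)$, and the posterior second moment is in turn bounded by a channel quantity --- the conditional variance of the (rescaled) Laguerre output about a linear estimator of $x/M$ from $y$ --- which to leading order in $A$ behaves like a constant multiple of $y$. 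For $Q$ I would take the same family as in Theorem 1, a square-root-reciprocal density on $[0,A]$ tilted by the parameter $\mu$ so that $\mathbb{E}[X]$ matches the operative average power, and evaluate the resulting functional $h(X)-\tfrac12\mathbb{E}[\log(X/M+c)]$. Collecting these pieces yields the leading term $\tfrac12\log A$, the blocks $-(1-\alpha)\mu-\log(\tfrac12-\alpha\mu)$ and $-e^{\mu}(\tfrac12-\alpha\mu)[\,\cdot\,]$, the constant $-\tfrac12\log 2\pi e-1$, and the $o_A(1)$ remainder, exactly as in (\ref{gagaga}).

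The explicit interference-dependent terms are then produced by two elementary evaluations. The $\arctan$ and $\log(1+\cdot)$ inside the bracket come from the integral $\int x^{-1/2}\log(x+c)\,dx$ (equivalently $\int x^{-1/2}(x+c)^{-1}\,dx=\tfrac{2}{\sqrt c}\arctan\sqrt{x/c}$) evaluated for the chosen $Q$, with $c$ the channel offset $\propto\big(12\beta\mathcal{E}(\beta\mathcal{E}+1)+1\big)/(2\beta\mathcal{E}+1)$ after the substitution $\lambda\mapsto\beta\mathcal{E}$ and with the extra factor $M$ entering from the $x\mapsto x/M$ rescaling. The terms $\log\!\big(\tfrac1M+\tfrac{1+\beta\mathcal{E}}{\mathcal{E}}\big)$, $-\tfrac12\log\tfrac{2\beta\mathcal{E}+1}{M}$ and $\big(\tfrac{\mathcal{E}}{M}+\beta\mathcal{E}\big)\log\!\big(1+\tfrac1{\mathcal{E}/M+\beta\mathcal{E}}\big)$ come from controlling $\mathbb{E}[\log Y]$ and the output structure through the output mean $\mathbb{E}[Y]=\mathbb{E}[X]/M+\beta\mathcal{E}$ together with a negative-binomial maximum-entropy comparison; these are precisely the images under $\lambda\mapsto\beta\mathcal{E}$, $\mathcal{E}\mapsto\mathcal{E}/M$ of the corresponding terms of (\ref{gagaga}), with the $\log M$ contributions bookkeeping between the input rescaling and the output-variance terms. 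Since $\alpha$ (hence $\mu$) was never fixed, every $\alpha\in(0,\mathcal{E}/A]$ gives a valid lower bound and one may pass to the maximum, which is the assertion; (\ref{gagaga}) is recovered in the non-interfering limit, so Theorem 4 is the consistent CDMA counterpart of Theorem 1.

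The step I expect to be the main obstacle is the transition from the genuinely discrete output $Y\in\mathbb{Z}^+$ to the continuous-channel entropy estimates: one must show that the Gaussian bounds on the posteriors $h(X\mid Y=y)$ and the passage to the channel-only second-moment estimate cost only $o_A(1)$ \emph{uniformly} in the free parameters $\mu,\alpha$ and in $M,\beta,\mathcal{E}$ --- the same analytic point underlying the $o_A(1)$ claims of Theorems 1--3, but now with the additional CDMA scalings to carry. Secondary difficulties are the bookkeeping around the worst-case substitution $\eta\le\mathcal{E}$ so that no hidden slack invalidates the bound, and the lengthy but routine evaluation of $h(X)-\tfrac12\mathbb{E}[\log(X/M+c)]$ and of the $\mathbb{E}[\log Y]$ integral for the chosen $Q$, which is what actually generates the $\arctan$ and $\log(1+\cdot)$ expressions.
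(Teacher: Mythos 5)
Your first paragraph is essentially right and matches the paper: the CDMA law (\ref{laguerre}) is the Laguerre law (\ref{laguerre1}) with $\lambda\mapsto\beta\eta$ and $x\mapsto x/M$, the input is the $\mu$-tilted square-root density on $[0,A]$, only achievability is needed, and the outer maximization over $0<\alpha\le\mathcal{E}/A$ is exactly how the paper handles the input-dependent interference. The gap is in your second step. The paper does \emph{not} use $I(X;Y)=h(X)-h(X\mid Y)$ with Gaussian bounds on the posteriors; it uses $I(X;Y)=H(Y)-H(Y\mid X)$. The output entropy is lower-bounded by $H(Y)\ge h(X)+\log(1+\tfrac{1+\lambda}{\eta})+(\eta+\lambda)\log(1+\tfrac{1}{\eta+\lambda})-1$ via the data-processing inequality for relative entropy, $D(Q\Vert Q_E)\ge D(R\Vert R_G)$, where $Q_E$ is an exponential reference input whose Laguerre output $R_G$ is shown in the Appendix to be Bose--Einstein (geometric); this is precisely where the terms $\log(\tfrac1M+\tfrac{1+\beta\mathcal{E}}{\mathcal{E}})$ and $(\tfrac{\mathcal{E}}{M}+\beta\mathcal{E})\log(1+\tfrac{1}{\mathcal{E}/M+\beta\mathcal{E}})$ come from. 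Your ``negative-binomial maximum-entropy comparison'' points the wrong way: a maximum-entropy argument on $Y$ gives an \emph{upper} bound on $H(Y)$, whereas a \emph{lower} bound on $H(Y)$ in terms of $h(X)$ is what is needed. The conditional term is handled by Lemma~1: $H(Y\mid X=x)\le\tfrac12\log 2\pi e\big(x(1+2\lambda)+\lambda(1+\lambda)+\tfrac1{12}\big)$, using the exact, explicit conditional variance of the Laguerre law; averaging $\tfrac12\log(\cdot)$ over the square-root input is what produces the $\arctan$ and $\log(1+\cdot)$ bracket with offset $\big(12\beta\mathcal{E}(\beta\mathcal{E}+1)+1\big)/\big(12A(2\beta\mathcal{E}+1)\big)$ (note $\lambda(\lambda+1)+\tfrac1{12}$). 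By contrast, your route requires a quantitative bound on $\mathrm{Var}(X\mid Y=y)$, which depends on the input law and which you only assert ``behaves like a constant multiple of $y$''; that step is unproven and is not what generates the stated expressions.

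A second, independent problem: you carry an $o_A(1)$ remainder and identify its uniform control as the main obstacle, but the theorem (like the lower bound (\ref{gagaga}) it generalizes) is a \emph{non-asymptotic} inequality valid for every finite $A$; only the upper bounds in Theorems~1--3 contain $o_A(1)$ terms. An argument that establishes the right-hand side only up to an unquantified vanishing error does not prove the statement as written. The paper's chain $H(Y)-H(Y\mid X)$ with the relative-entropy and Lemma~1 bounds yields the displayed expression exactly, with no asymptotic slack.
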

\subsubsection{Bounds on channel capacity in presence of only average power constraint}
\begin{theo}
For coherent CDMA optical systems, if only average power constraint is imposed, then channel capacity can be lower-bounded as
\begin{equation}
\begin{split}
\mathcal{C}(\mathcal{E}) \geq& \frac{1}{2}\log \mathcal{E}-\sqrt{\frac{M\pi \left( 12\beta \mathcal{E} (\beta \mathcal{E} +1)+1 \right)}{24\mathcal{E}(2\beta \mathcal{E} +1)}}-1\\
&+(\frac{\mathcal{E}}{M}+{\beta\mathcal{E}})\log (1+\frac{1}{\frac{\mathcal{E}}{M}+{\beta\mathcal{E}}})\\
&+\log (\frac{1}{M}+\frac{1+{\beta\mathcal{E}}}{\mathcal{E}})-\frac{1}{2}\log(\frac{2{\beta\mathcal{E}}+1}{M})
\end{split}
\end{equation}
\end{theo}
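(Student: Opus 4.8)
The plan is to obtain this bound as a limiting case of the preceding theorem (Theorem 5), exactly as the paper did for the short-distance systems when it deduced the average-power-only bound from the general two-constraint bound by letting $\alpha \to 0$ with $\mathcal{E}$ fixed. First I would start from the lower bound of Theorem 5, which is a maximum over $0 < \alpha \le \mathcal{E}/A$ of an expression depending on $A$, $\alpha$, $\mu$, $\mathcal{E}$, $\beta$ and $M$. Since we are now free of the peak constraint, I would let $A \to \infty$; then the admissible range for $\alpha$ shrinks toward $0$, and the relevant regime is $\alpha \to 0$ with $\mathcal{E}$ held fixed. The terms $\frac{1}{2}\log A$ and $-(1-\alpha)\mu$ must be tracked together, because with the maxentropic input used to generate these bounds one has the asymptotic relation $\mu \sim \frac{1}{2}\log(\alpha A)+\text{const}$ (the same bookkeeping the paper invokes implicitly in Theorems 1--3), so that $\tfrac12\log A - (1-\alpha)\mu + \log(\tfrac12-\alpha\mu)$ collapses to $\tfrac12\log\mathcal{E}$ plus vanishing corrections.

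Next I would take the limit term by term in the bracketed expression. In the arctangent/logarithm block the coefficient $\frac{M(12\beta\mathcal{E}(\beta\mathcal{E}+1)+1)}{12A(2\beta\mathcal{E}+1)}$ tends to $0$ as $A\to\infty$; using $\arctan(z)\to\pi/2$ as $z\to\infty$ and $\log(1+u)\to 0$ as $u\to 0$, the product $e^{\mu}(\tfrac12-\alpha\mu)$ times the bracket — after substituting the asymptotics of $e^{\mu}$ and $\tfrac12-\alpha\mu$ — converges to $\sqrt{\dfrac{M\pi(12\beta\mathcal{E}(\beta\mathcal{E}+1)+1)}{24\mathcal{E}(2\beta\mathcal{E}+1)}}$, which is precisely the second term of the claimed bound (the $\pi$ appears exactly from $\arctan\to\pi/2$ combined with the $\sqrt{\,\cdot\,}$, matching the analogous passage from Theorem 1 to Theorem 3 in the non-CDMA case). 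The remaining terms of Theorem 5 — namely $\log\!\big(\tfrac1M+\tfrac{1+\beta\mathcal{E}}{\mathcal{E}}\big)$, $-\tfrac12\log\!\big(\tfrac{2\beta\mathcal{E}+1}{M}\big)$, and $(\tfrac{\mathcal{E}}{M}+\beta\mathcal{E})\log\!\big(1+\tfrac{1}{\frac{\mathcal{E}}{M}+\beta\mathcal{E}}\big)$ — do not involve $A$ or $\alpha$ at all, so they survive the limit unchanged, and the constants $-\tfrac12\log 2\pi e - 1$ reduce to the $-1$ in the statement after the $\tfrac12\log 2\pi e$ is absorbed into the collapse of the leading block described above.

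Finally I would justify that passing to the limit is legitimate as a lower bound: since $\mathcal{C}(\mathcal{E}) = \sup_{A} \mathcal{C}(A,\alpha A)$ with $\mathcal{E}=\alpha A$ fixed (removing the peak constraint can only increase capacity), any limit point of the Theorem 5 lower bound along $A\to\infty$ is itself a lower bound on $\mathcal{C}(\mathcal{E})$. I would also note, as the paper does for the earlier theorems, that the maximization over $\alpha$ in Theorem 5 is automatically resolved in this limit since the admissible interval contracts to a point, so no separate optimization step is needed. The main obstacle I anticipate is the careful asymptotic bookkeeping of the coupled terms $\tfrac12\log A$, $(1-\alpha)\mu$ and $e^{\mu}(\tfrac12-\alpha\mu)$: one must use the precise maxentropic-input asymptotics for $\mu$ as $\alpha A\to\infty$ (rather than a crude bound) to see that the leading block degenerates to exactly $\tfrac12\log\mathcal{E}-1$ with no residual $\alpha$- or $A$-dependence, and to confirm that the $e^{\mu}$ prefactor multiplying the vanishing bracket produces a finite, nonzero limit equal to the stated square-root term rather than $0$ or $\infty$. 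Everything else is routine substitution and continuity of $\arctan$ and $\log(1+\cdot)$ at their respective limits.
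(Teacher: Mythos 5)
Your overall strategy (recover the average-power-only bound from the two-constraint bound by fixing $\mathcal{E}$ and letting $A\to\infty$, i.e.\ $\alpha\to 0$) is legitimate, and the paper itself endorses exactly this route when it remarks that the analogous non-CDMA bound ``can also be obtained from lower bound (13) when $\mathcal{E}$ is kept fix and $\alpha$ tends to zero.'' The paper's primary derivation is instead direct: it plugs the maxentropic density $Q'(x)\propto (\,\mathcal{E}x)^{-1/2}e^{-x/2\mathcal{E}}$ on $x\ge 0$ into the $H(Y)\ge h(X)+\cdots$ inequality (with the CDMA output distribution $R_G(y)=\frac{1}{1/M+\lambda+\eta}(\frac{\eta+\lambda}{1/M+\lambda+\eta})^y$ from the Appendix and $\lambda=\beta\mathcal{E}$) and the variance bound of Lemma~1 for $H(Y|X)$. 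Either route is acceptable.

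However, there is a concrete error in the key step of your limit computation. You assert the asymptotic $\mu\sim\frac12\log(\alpha A)+\mathrm{const}$, which would make $\mu$ bounded as $A\to\infty$ with $\alpha A=\mathcal{E}$ fixed. This is false: $\mu$ is pinned down by the mean constraint for the density $Q'(x)=\frac{\sqrt{\mu}}{\sqrt{A\pi x}\,\mathrm{erf}(\sqrt{\mu})}e^{-\mu x/A}$, namely $\alpha=\frac{1}{2\mu}-\frac{e^{-\mu}}{\sqrt{\pi\mu}\,\mathrm{erf}(\sqrt{\mu})}$, so as $\alpha\to 0$ one has $\mu\to\infty$ with $\mu\approx\frac{1}{2\alpha}=\frac{A}{2\mathcal{E}}$, and $\frac12-\alpha\mu=\frac{\sqrt{\mu}\,e^{-\mu}}{\sqrt{\pi}\,\mathrm{erf}(\sqrt{\mu})}\approx\sqrt{\mu/\pi}\,e^{-\mu}$. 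With your asymptotic the computation collapses: $\frac12\log A-(1-\alpha)\mu$ would diverge like $\frac12\log A$, $-\log(\frac12-\alpha\mu)$ would tend to $\log 2$, and $e^{\mu}(\frac12-\alpha\mu)$ times the vanishing bracket would tend to $0$ rather than to the square-root term. With the correct asymptotics everything does work: $-(1-\alpha)\mu-\log(\frac12-\alpha\mu)\to\alpha\mu-\frac12\log(\mu/\pi)$, which combines with $\frac12\log A-\frac12\log 2\pi e$ to give $\frac12\log\mathcal{E}$; and $e^{\mu}(\frac12-\alpha\mu)\approx\sqrt{\mu/\pi}=\sqrt{A/2\pi\mathcal{E}}$ multiplied by $\pi\sqrt{c/A}$ (from $\arctan\to\pi/2$, with $c=\frac{M(12\beta\mathcal{E}(\beta\mathcal{E}+1)+1)}{12(2\beta\mathcal{E}+1)}$) yields exactly $\sqrt{\pi c/2\mathcal{E}}$, the stated term. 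So your plan is sound and the terms you identify as $A$-independent do carry over unchanged, but the proof as written rests on a wrong expansion of $\mu$; you need the relation between $\mu$ and $\alpha$ coming from the average-power constraint, not a logarithmic one.
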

\begin{figure}[t]
\centering
\includegraphics[scale=.5]{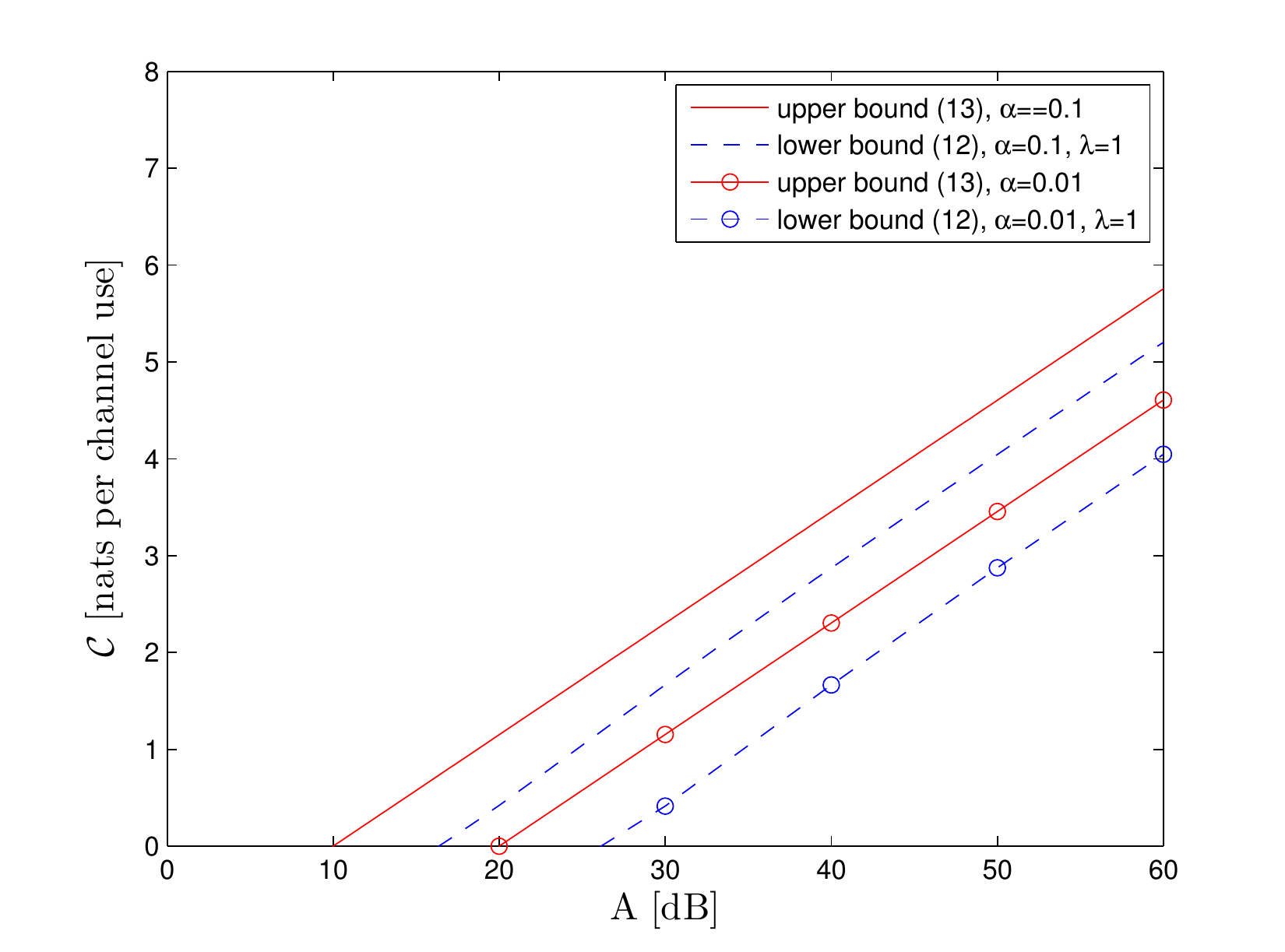}
\caption{Bound on the capacity of short distance visible light communication systems when both peak and average power constraints are active.}
\label{Fig1}
\end{figure}
\begin{figure}[t]
\centering
\includegraphics[scale=.5]{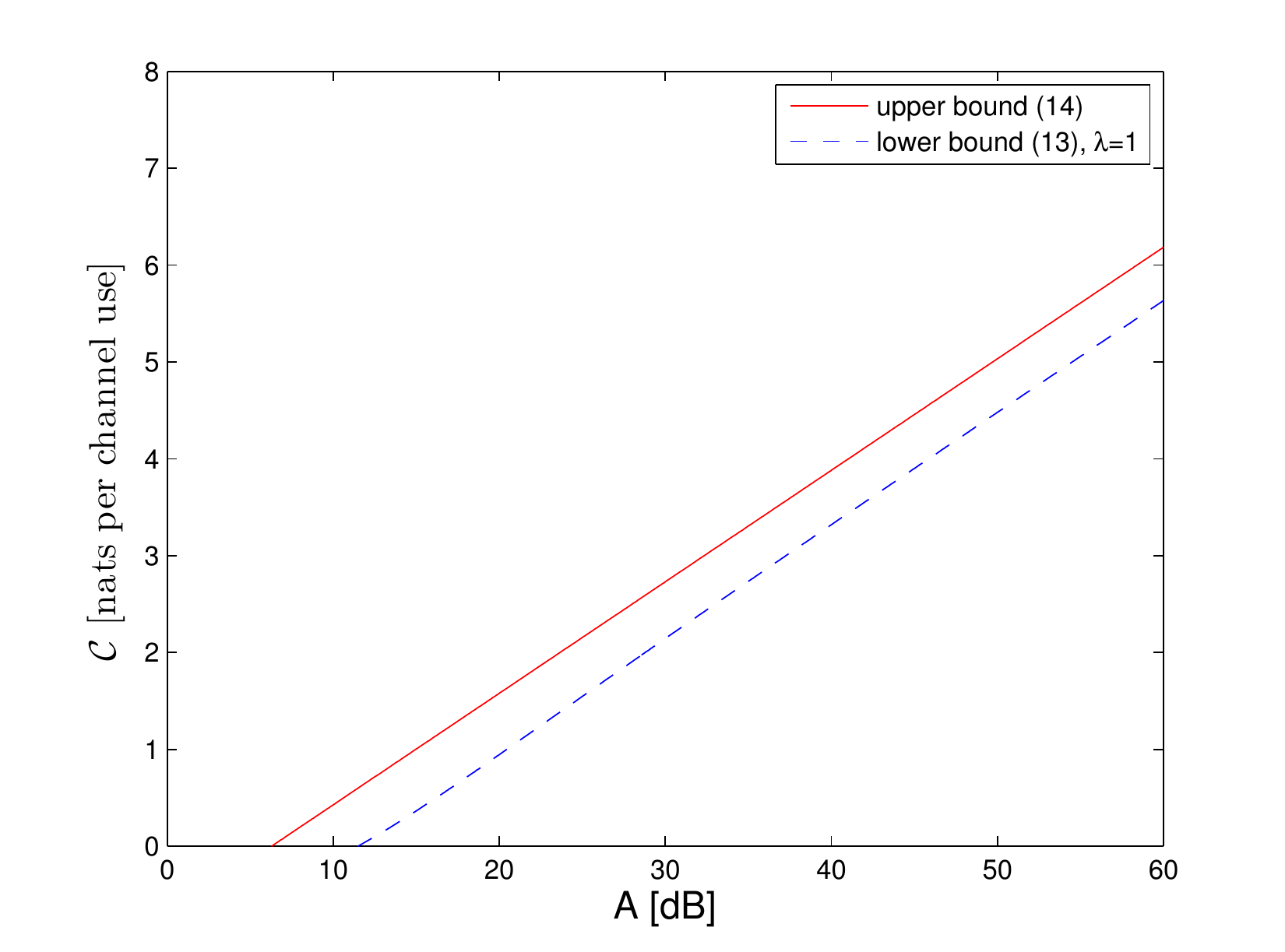}
\caption{Bound on the capacity of short distance visible light communication systems when only peak power constraint is active.}
\label{Fig2}
\end{figure}
\begin{figure}[h]
\centering
\includegraphics[scale=.5]{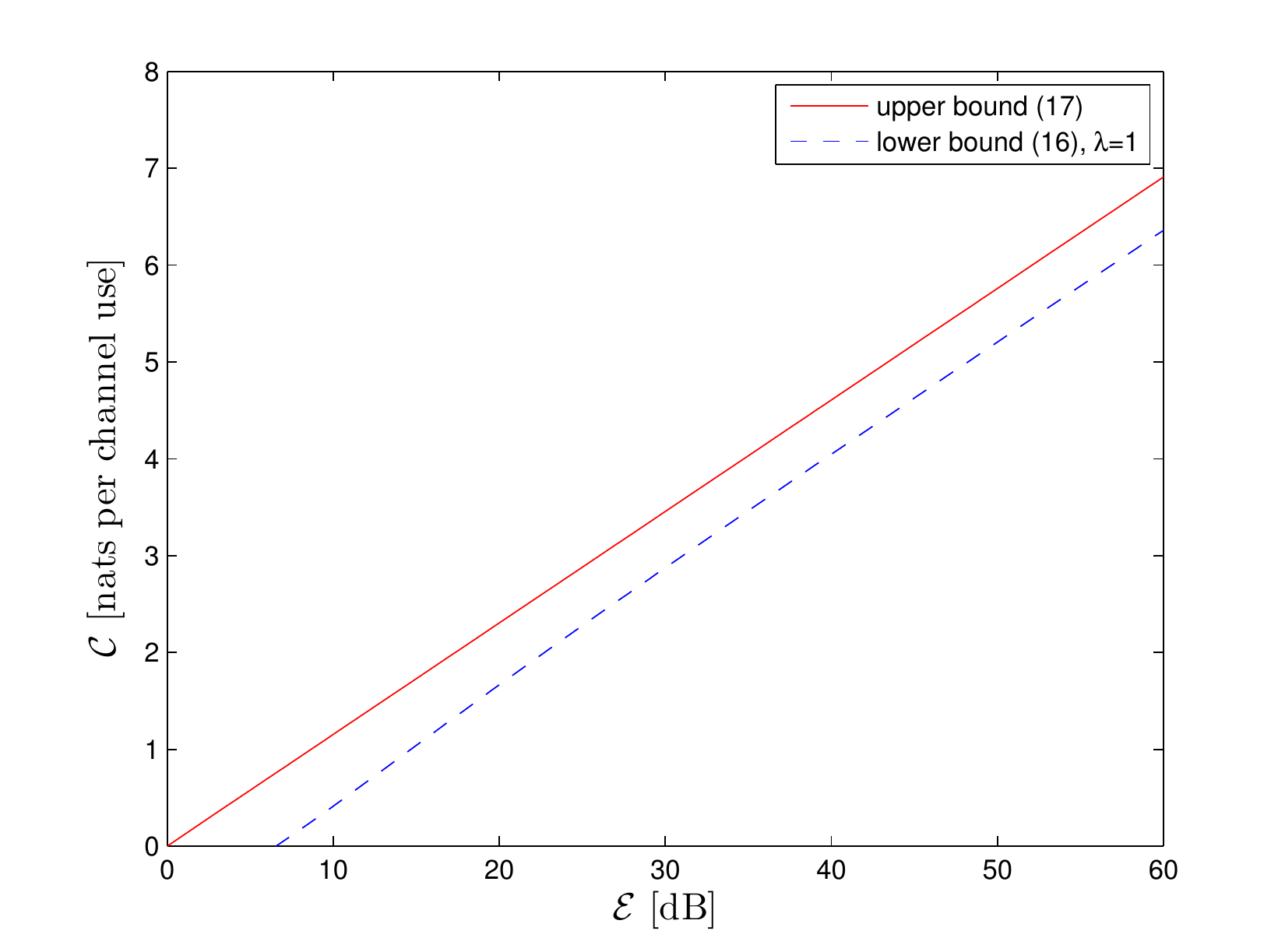}
\caption{Bound on the capacity of short distance visible light communication systems when only average power constraint is active.}
\label{Fig3}
\end{figure}
\begin{figure}[h]
\centering
\includegraphics[scale=.5]{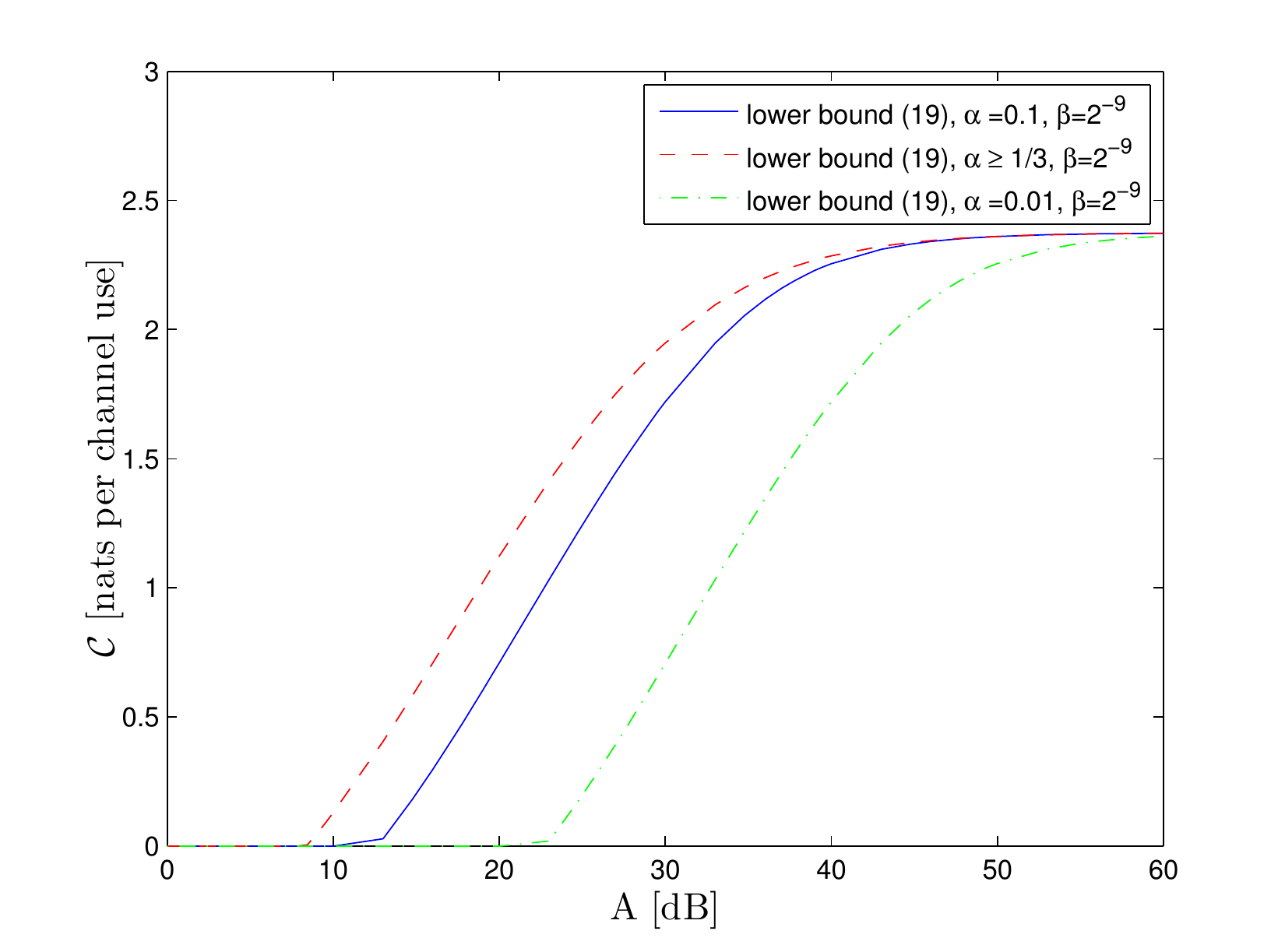}
\caption{Bound on the capacity of coherent CDMA systems when peak power constraint is active.}
\label{Fig4}
\end{figure}
\begin{figure}[h]
\centering
\includegraphics[scale=.5]{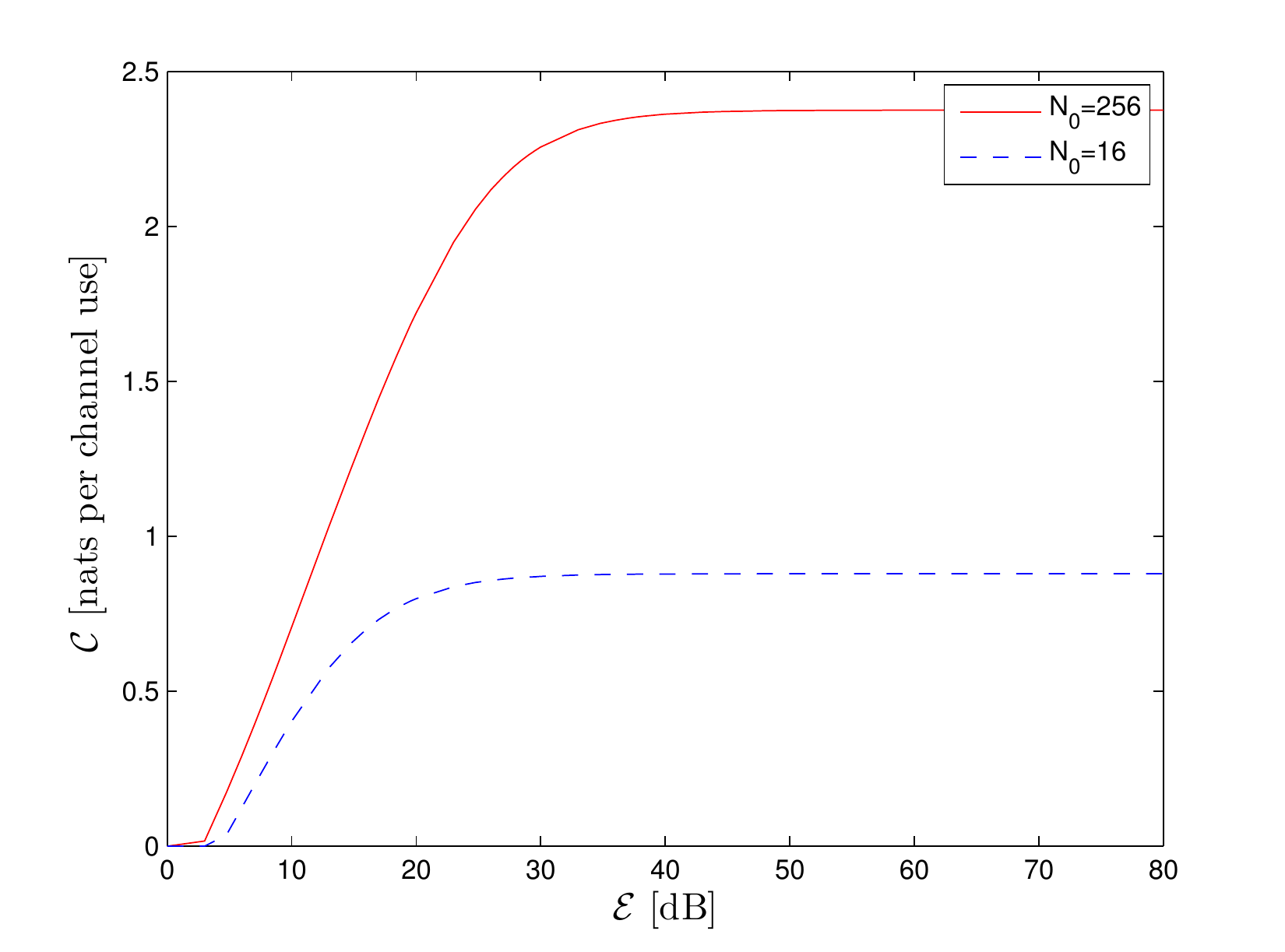}
\caption{Bound on the capacity of coherent CDMA systems when average power constraint is active.}
\label{Fig5}
\end{figure}
\begin{figure}[h]
\centering
\includegraphics[scale=.5]{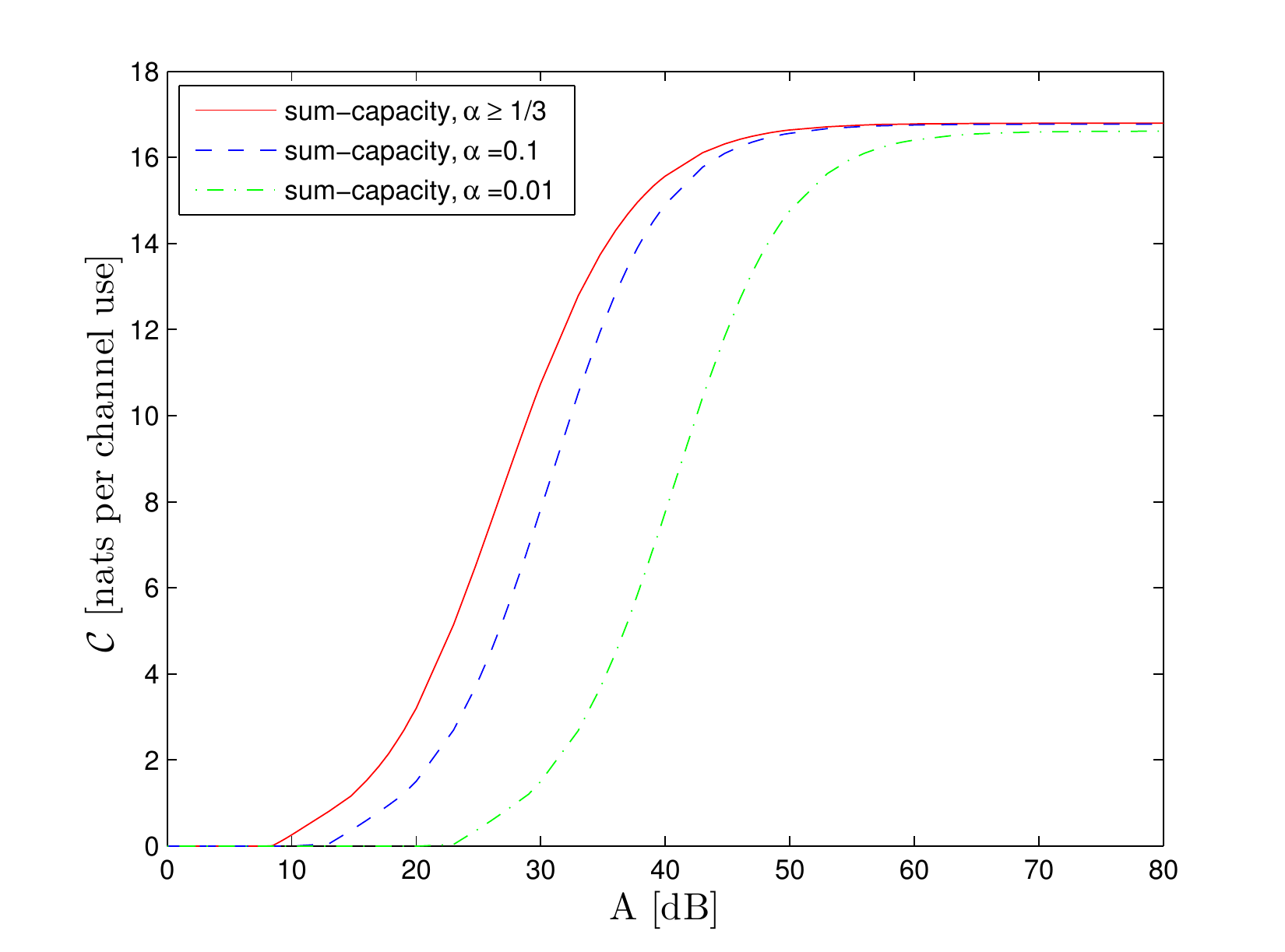}
\caption{Bound on the sum-capacity of coherent CDMA systems when peak power constraint is active.}
\label{sumcapacity6}
\end{figure}
\subsubsection{Evaluation of the sum-capacity}
As mentioned above, coherent CDMA channel is an interference channel and by considering the fact that each user sends its data independently of all other users, there is no cooperation among users. As a result, the interference channel treats as $M$ point-to-point single channels. In such a channel the average and peak input power of each user is $s$ and $s$ respectively and also the average input power of the noise is $s$ for every single channel due to the power constraints, therefore, the sum-capacity can be expressed as
\begin{equation}
\mathcal{C}_{sum}(\mathcal{A},\mathcal{E},M)=M\mathcal{C}(\mathcal{A},\mathcal{E})
\end{equation}
It is important to note that $\mathcal{C}(\mathcal{A},\mathcal{E})$ is a decreasing function of $M$, thus, one can conclude that for any given $\mathcal{A}$ and $\mathcal{E}$, there is an optimum $M$ which maximize sum-capacity. Fig. \ref{sumcapacity6} illustrates sum-capacity for three different cases of constraints discussed before and the optimum number of users for these three cases are depicted in Fig. \ref{mmm5}.
\begin{figure}[h]
\centering
\includegraphics[scale=.5]{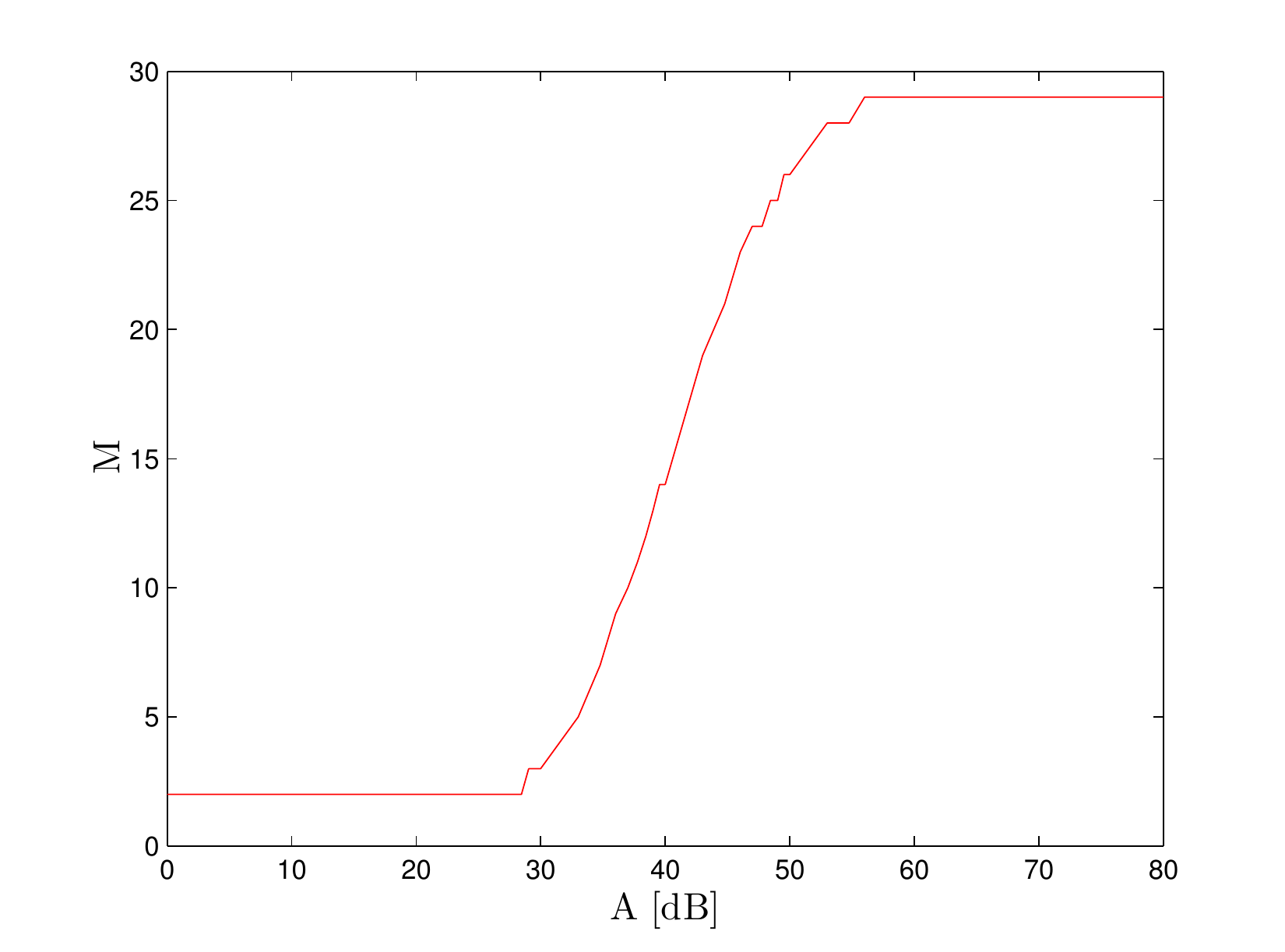}
\caption{Optimum number of users which maximaze sum-capacity.}
\label{mmm5}
\end{figure}
\section{Derivation}

In this section, derivations of the upper and lower bounds obtained in previous section are presented.

As mentioned before, for proving the lower bounds, one can drop the maximization on the input probability distribution as pointed in (6) and choose an arbitrary distribution in order to calculate the mutual information. But, since we are looking for tight lower bounds, input probability distribution must be chosen in a way that the corresponding mutual information results be as close as possible to capacity. It should be noted that not only obtaining such a distribution itself is a difficult problem, but also computing the entropy and distribution of its corresponding output might be indomitable. So, in order to avoid these issues, we apply a lower bound for $H(Y)$ in terms of $h(x)$ by using data processing theorem and an upper bound for $H(Y|X)$ with the help of following lemma:
\begin{lemma}
Let $K$ be a Laguerre random variable with distribution
\begin{equation}
\begin{split}
Pr[K]=&\frac{e^{\frac{-I_p(t)}{\lambda}}}{1+\lambda}\left(\frac{\lambda}{1+\lambda} \right)^K \left( \sum_{j=0}^{\infty} \left( \frac{I_p(t)}{\lambda (1+\lambda)} \right)^j \frac{(K+j)!}{(j!)^2 K!} \right)\\
=&\frac{e^{\frac{-I_p(t)}{1+\lambda}}}{1+\lambda}\left(\frac{\lambda}{1+\lambda} \right)^K \\
&\times  \left( \sum_{j=0}^{K} \left( \frac{I_p(t)}{\lambda (1+\lambda)} \right)^j \frac{K!}{(j!)^2 (K-j)!} \right)
\label{kkkkklab}
\end{split}
\end{equation}
then we can upper-bound $H(K)$ as below
\begin{equation}
H(K) \leq \frac{1}{2}\log 2\pi e (\eta(1+2\lambda) +\lambda (1+\lambda)+ \frac{1}{12}),
\end{equation}
where $\eta=E[I_p(t)]$.
\end{lemma}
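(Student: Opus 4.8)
The plan is to read the target inequality for what it is: a bound of the form $\tfrac12\log\!\bigl(2\pi e(\sigma^2+\tfrac1{12})\bigr)$, which is the classical upper bound on the (discrete) entropy of an integer-valued random variable in terms of its variance. Everything therefore reduces to two tasks: (i) compute $\mathrm{Var}(K)$ for the Laguerre law in (\ref{kkkkklab}) and show it equals $\eta(1+2\lambda)+\lambda(1+\lambda)$, and (ii) prove the entropy--variance inequality and substitute.

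For task (i) I would not attack the Laguerre-polynomial series directly but instead use the construction by which the Laguerre channel arises: $K\mid\Lambda\sim\mathrm{Poisson}(\Lambda)$, where the random intensity $\Lambda$ is, up to the scale factor $\lambda/2$, a noncentral chi-squared variable with two degrees of freedom and noncentrality $2I_p(t)/\lambda$ (this is exactly the ``monochromatic signal plus narrowband Gaussian'' picture underlying the model, cf. the discussion preceding (\ref{laguerre1}) and [12]). From the Poisson-mixing identity $G_K(s)=\mathbb{E}\,e^{\Lambda(s-1)}$ together with the chi-squared moment generating function one obtains the closed form
\begin{equation}
G_K(s)=\frac{1}{1-\lambda(s-1)}\,\exp\!\left(\frac{I_p(t)\,(s-1)}{1-\lambda(s-1)}\right),
\end{equation}
and differentiating at $s=1$ yields $\mathbb{E}[K]=I_p(t)+\lambda$ and $\mathbb{E}[K(K-1)]=I_p(t)^2+4\lambda I_p(t)+2\lambda^2$, hence $\mathrm{Var}(K)=I_p(t)(1+2\lambda)+\lambda(1+\lambda)$. (The same two moments can alternatively be extracted by resumming the series in (\ref{kkkkklab}), but the generating-function route is cleaner. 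If $I_p(t)$ is itself random with mean $\eta$, one applies this conditionally on $I_p(t)$ and then uses the concavity of $\log$ in task (ii) to replace $I_p(t)$ by $\eta$; this is the form in which the lemma feeds the bound on $H(Y\mid X)$.)

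For task (ii) I would introduce $U$ uniform on $[-\tfrac12,\tfrac12]$, independent of $K$. Because $K$ is integer-valued, the density of $K+U$ is piecewise constant, equal to $\Pr[K=k]$ on $[k-\tfrac12,k+\tfrac12]$, so $h(K+U)=H(K)$ with no loss. The Gaussian maximum-entropy property then gives $h(K+U)\le\tfrac12\log\!\bigl(2\pi e\,\mathrm{Var}(K+U)\bigr)$, and $\mathrm{Var}(K+U)=\mathrm{Var}(K)+\tfrac1{12}$. Combining with task (i),
\begin{equation}
H(K)\le\tfrac12\log\!\Bigl(2\pi e\bigl(\eta(1+2\lambda)+\lambda(1+\lambda)+\tfrac1{12}\bigr)\Bigr),
\end{equation}
as claimed.

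The only genuinely substantive step is task (i): obtaining a tractable expression for the first two moments (equivalently, the probability generating function) of the Laguerre distribution. Task (ii) is a standard dithering argument and presents no difficulty.
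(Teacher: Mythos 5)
Your proposal is correct and follows essentially the same route as the paper: the paper's proof consists of asserting that $\mathrm{Var}(K)=\eta(1+2\lambda)+\lambda(1+\lambda)$ ``with a bit of mathematical analysis'' and then invoking the discrete entropy--variance bound $H(K)\le\frac{1}{2}\log\bigl(2\pi e(\mathrm{Var}(K)+\frac{1}{12})\bigr)$ from Cover and Thomas (Theorem~16.3.3), whose standard proof is exactly your dithering argument. You simply supply the two details the paper omits --- the variance computation via the generating function $G_K(s)=\frac{1}{1-\lambda(s-1)}\exp\bigl(\frac{I_p(t)(s-1)}{1-\lambda(s-1)}\bigr)$ (which checks out against the PMF in (\ref{kkkkklab})) and the uniform-dither proof of the entropy bound --- plus a correct remark on how Jensen's inequality handles random $I_p(t)$ when the lemma is applied to $H(Y|X)$.
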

\begin{proof}
With a bit of mathematical analysis, it can be shown that the variance of distribution (\ref{kkkkklab}) is
\begin{equation}
Var (K)=\eta(1+2\lambda) +\lambda (1+\lambda),
\end{equation}
then the proof completes by using [2, Theorem 16.3.3].
\end{proof}
By using data processing theorem we can write
\begin{equation}
D(Q(.)||Q_E(.)) \geq D(R(.)||R_G(.)),
\end{equation}
where $Q(.)$ denotes an arbitrary CDF with mean $\eta$ on $\mathbb{R}^+$ and $R(.)$ denotes its corresponding output on $\mathbb{Z}^+$ when channel statistics is laguerre. Similarly, $Q_E(.)$ denotes exponential CDF with mean $\eta$ on $\mathbb{R}^+$ and $R_G(.)$ denotes its corresponding output on $\mathbb{Z}^+$. It cab be shown that $R_G(.)$ is a geometric PMF with mean $\eta + \lambda$.
\newline
(See Appendix for the proof)

for the left hand side of () we have
\begin{equation}
\begin{split}
D(Q(.)||Q_E(.)) =& \int_0^\infty Q'(x) \log \frac{Q'(x)}{\frac{1}{\eta}e^{\frac{-x}{\eta}}}\\
=&-h(X)+\log (\eta) +1
\end{split}
\end{equation}
and for the right-hand side one can show that
\begin{equation}
\begin{split}
D(R(.)||R_G(.))=&\sum^{\infty}_{y=0} R(y)\log{\frac{R(y)}{\frac{1}{1+\eta+\lambda}\left(\frac{\eta +\lambda}{1+\lambda+\eta} \right)^y}}\\
=&-H(Y)+(1+\lambda +\eta)\log(1+\lambda +\eta)\\
 &- (\eta +\lambda)\log(\eta +\lambda)
\end{split}
\end{equation}
and finally, we have
\begin{equation}
\begin{split}
H(Y) \geq& h(X) +\log(1+\frac{1+\lambda}{\eta}) + (\eta+\lambda)\log(1+\frac{1}{\eta +\lambda}) \\
&-1 
\end{split}
\end{equation}

The remainder of the derivation of  lower bounds is based on maximizing differential entropy under the given constraints. To this goal, we choose CDF $Q(.)$ to maximize $h(X)-\frac{1}{2}E [\log X]$ either under constraints (2) or (3) or both. These distributions can be represented with the following densities:
\begin{equation}
Q'(x)=\frac{\sqrt{\mu}}{\sqrt{A\pi x}.erf(\sqrt{\mu})}.\stackrel[]{}{e^{-\frac{\mu}{A}x}},\ \ \ \ 0 \leq x \leq A
\end{equation}
when both constraints (2) and (3) are active, and
\begin{equation}
Q'(x)=\frac{1}{\sqrt{4Ax}},\ \ \ \ 0 \leq x \leq A
\end{equation}
when constraint (2) is inactive and (3) is active, and
\begin{equation}
Q'(x)=\frac{1}{\sqrt{2 \pi \mathcal{E}x}},\ \ \ \ x \geq 0
\end{equation}
when only constraint (2) is imposed.

Finally, lower bounds can be obtained by using above input distributions analogously to [4].

\begin{figure}[t]
\centering
\includegraphics[scale=1.1]{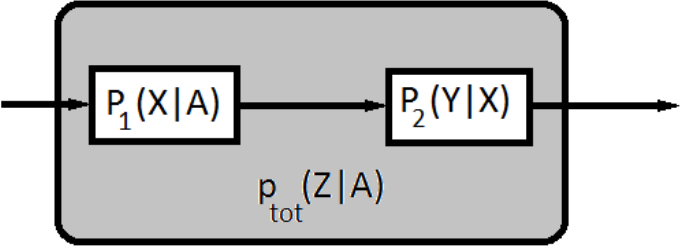}
\caption{A typical degradd channel}
\label{Fig7}
\end{figure}

Derivation of upper bounds are based on data processing inequality. The proof is structured by following steps: at first, we will show that every Laguerre channels with density (1) and arbitrary average noise power $\lambda$ are degraded version of Poisson channel with no dark current. Then, with the help of Markov chain and data processing inequality, we will show that every upper bounds which is valid for Poisson channel with no dark current, is also valid for Laguerre channel with distribution (1), therefore, we can apply asymptotic upper bounds introduced in [6] to our model. We start the proof of the upper bounds with the following lemma:
\begin{lemma}
Consider the degraded channel depicted in the figure (\ref{Fig7}), if $p_2(y|x)$ can be expressed as below
\begin{equation}
p_2(y|x)=\sum_{(\sum_{i=1}^y y_i=y)} \prod_{i=1}^{x}p_2(y_i|1) \ \ \ y,x \in \mathbb{Z}^+,
\end{equation}
then we have
\begin{equation}
\Psi_{tot}(z)=\Psi_1(\Psi_2(z)),
\end{equation}
where $\Psi_1(z)$, $\Psi_2(z)$ and $\Psi_{tot}(z)$ are moment generating function of $p_1(x|a)$, $p_2(y|1)$ and $p_{tot}(y|a)$ respectively.
\end{lemma}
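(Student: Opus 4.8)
The plan is to unfold the cascade structure of Fig.~\ref{Fig7} by a direct computation with the generating functions, using the hypothesis on $p_2(y|x)$ to factor the sum over output sequences. Throughout I read $\Psi(z)=\sum_{k\ge 0}p(k)z^{k}$ for a $\mathbb{Z}^+$-valued law $p$, which is the transform for which the claimed composition identity holds (the pure moment generating function is recovered by the substitution $z\mapsto e^{z}$). Since the channel in Fig.~\ref{Fig7} is the cascade of $p_1(\cdot|a)$ followed by $p_2(\cdot|\cdot)$, the overall transition law is $p_{tot}(y|a)=\sum_{x\ge 0}p_1(x|a)\,p_2(y|x)$, hence
\begin{equation}
\Psi_{tot}(z)=\sum_{y\ge 0}z^{y}\,p_{tot}(y|a)=\sum_{x\ge 0}p_1(x|a)\sum_{y\ge 0}z^{y}\,p_2(y|x).
\end{equation}

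Next I would substitute $p_2(y|x)=\sum_{y_1+\cdots+y_x=y}\prod_{i=1}^{x}p_2(y_i|1)$ into the inner sum. Summing over all $y\ge 0$ removes the constraint $y_1+\cdots+y_x=y$, so the inner double sum becomes an unconstrained sum over $(y_1,\dots,y_x)\in(\mathbb{Z}^+)^{x}$ in which $z^{y}=z^{y_1}\cdots z^{y_x}$ distributes over the product; since all terms are nonnegative and dominated by probabilities, for $|z|\le 1$ the rearrangement is legitimate by Tonelli's theorem and the sum factors,
\begin{equation}
\sum_{y\ge 0}z^{y}\,p_2(y|x)=\prod_{i=1}^{x}\Big(\sum_{y_i\ge 0}z^{y_i}\,p_2(y_i|1)\Big)=\big(\Psi_2(z)\big)^{x}.
\end{equation}
Plugging this back gives $\Psi_{tot}(z)=\sum_{x\ge 0}p_1(x|a)\big(\Psi_2(z)\big)^{x}=\Psi_1\big(\Psi_2(z)\big)$, which is the assertion.

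Equivalently, and perhaps more transparently, the identity is the classical compound/branching formula: the hypothesis says that conditioned on the intermediate variable $X$ the output is $Y=Y_1+\cdots+Y_X$ with the $Y_i$ i.i.d.\ with law $p_2(\cdot|1)$, so by the tower property $\Psi_{tot}(z)=\mathbb{E}\big[\mathbb{E}[z^{Y}\mid X]\big]=\mathbb{E}\big[\Psi_2(z)^{X}\big]=\Psi_1(\Psi_2(z))$. The lemma being essentially a bookkeeping identity, I do not expect a genuine obstacle; the only points requiring a line of care are the degenerate term $x=0$ (the empty product equals $1$, which forces $p_2(y|0)=\delta_{y,0}$, consistent with $\Psi_2(z)^{0}=1$) and the interchange of the two infinite summations, which is justified by absolute convergence for $|z|\le 1$.
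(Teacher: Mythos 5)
Your proposal is correct and follows essentially the same route as the paper's own proof: expand $\Psi_{tot}$ over the intermediate variable, substitute the convolution hypothesis on $p_2(y|x)$, interchange the summations so the constrained sum over $(y_1,\dots,y_x)$ factors into $\Psi_2(z)^x$, and recognize $\sum_x p_1(x|a)\Psi_2(z)^x=\Psi_1(\Psi_2(z))$. Your added remarks on the Tonelli justification for the interchange and on the $x=0$ boundary case are careful touches the paper omits, but they do not change the argument.
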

\begin{figure}[t]
\centering
\includegraphics[scale=.3]{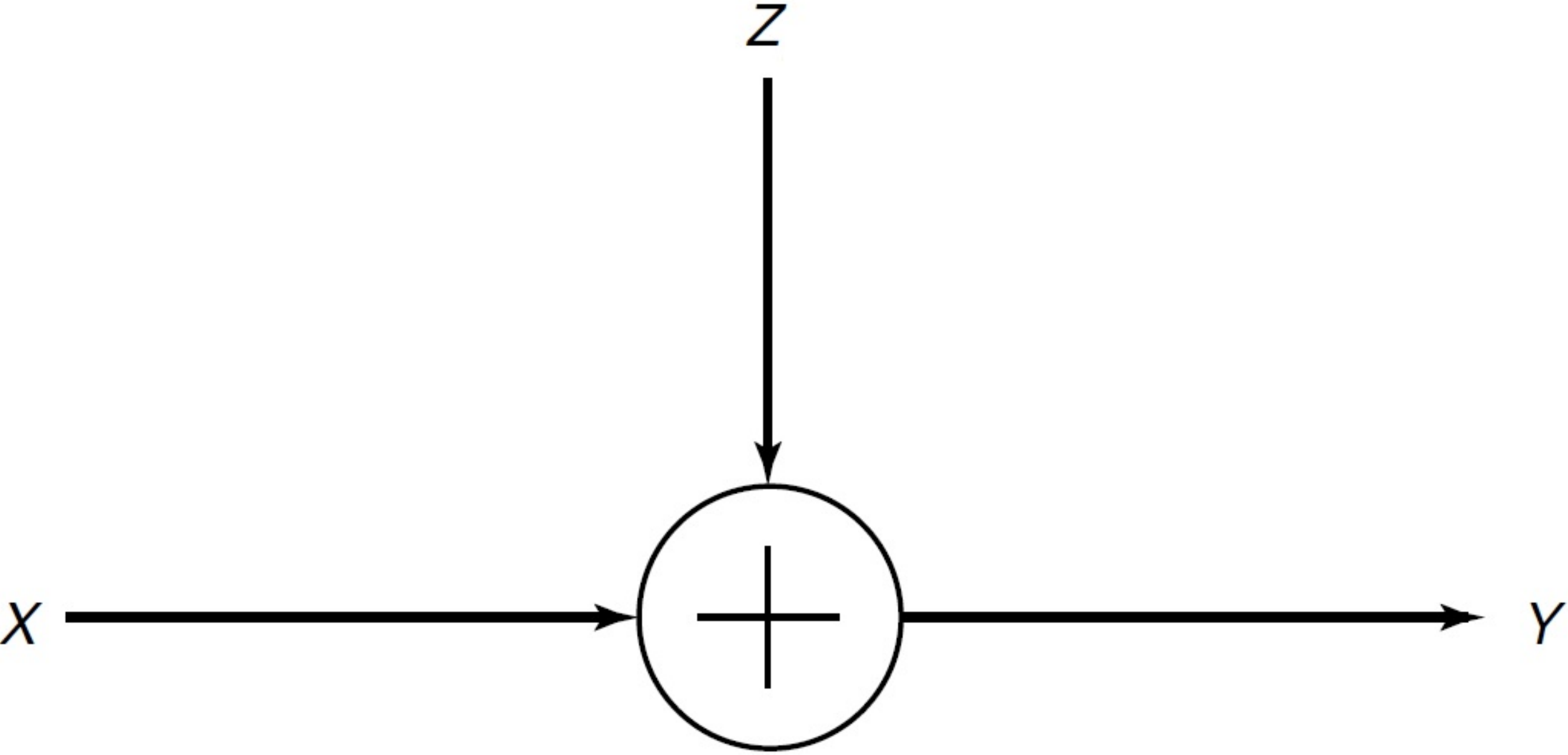}
\caption{A typical channel with independent noise}
\label{Fig8}
\end{figure}
\begin{proof}
\begin{equation}
\begin{split}
\Psi_{tot}(z)=&\sum_{y=0}^{\infty}p(y|a)z^y\\
=&\sum^{\infty}_{y=0}\left(p_2(y|x) \sum^{\infty}_{x=0}p_1(x|a)\right)z^y\\
=&\sum^{\infty}_{y=0}\left(\sum_{(\sum_{i=1}^x y_i=y)} \prod_{i=1}^{x}p_2(y_i|1) \sum^{\infty}_{x=0}p_1(x|a)\right)z^y\\
=&\sum^{\infty}_{x=0}\left(\sum^{\infty}_{y=0}\sum_{(\sum_{i=1}^x y_i=y)} \prod_{i=1}^{x}\left(p_2(y_i|1)z^{y_i}\right) \right)p_1(x|a)\\
=&\sum^{\infty}_{x=0}\left(\sum_{y_1=0}^{\infty} \sum_{y_2=0}^{\infty}... \sum_{y_x=0}^{\infty} \prod_{i=1}^{x}\left(p_2(y_i|1)z^{y_i}\right) \right)p_1(x|a)\\
=&\sum^{\infty}_{x=0}\vast(\left(\sum_{y_1=0}^{\infty}p_2(y_1|1)z^{y_1}\right)\left(\sum_{y_1=0}^{\infty}p_2(y_2|1)z^{y_2}\right)...\\
&\times \left(\sum_{y_1=0}^{\infty}p_2(y_x|1)z^{y_x}\right) \vast)p_1(x|a)\\
=&\sum_{x=0}^{\infty}p_1(x|a)\Psi_2(z)^x\\
=&\Psi_1(\Psi_2(z))
\end{split}
\end{equation}
\end{proof}
For such a channel by data processing inequality one can conclude that
\begin{equation}
\begin{split}
&I(A;Y) \leq \min \left( I(X;Y),I(A;X)\right).
\end{split}
\end{equation}
In addition, it is obvious that by adding an independent noise to output of the above channel, as depicted in figure (\ref{Fig8}), one can make another version of  degraded channel. Then, it can be concluded that
\begin{equation}
\begin{split}
\Psi_{z}(z)=\Psi_{tot}(z).\Psi_n(z)
\end{split}
\end{equation}
where $\Psi_n(z)$ is moment generating function of the noise. Also, in this channel Because of the Markov chain we have
\begin{equation}
I(A;Z) \leq I(A;Y).
\end{equation}
The proof of the upper bounds finishes by following lemma:
\begin{lemma}
The Laguerre channel with arbitrary $\lambda$ is a degraded version of Poisson channel with no dark current.
\end{lemma}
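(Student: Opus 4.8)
\emph{Proof idea.} The plan is to realize the Laguerre kernel as a Poisson channel with no dark current followed by two post-processing stages, and then feed this decomposition into Lemma~2 and the independent-noise construction of Fig.~\ref{Fig8}. The first step is to pass to probability generating functions: summing the series that defines (\ref{laguerre1}) --- equivalently, using the generating function $\sum_{K}L_{K}(w)t^{K}=(1-t)^{-1}e^{-wt/(1-t)}$ of the Laguerre polynomials --- one obtains, for $z$ near the origin,
\begin{equation}
\sum_{y=0}^{\infty}W(y\,|\,x)\,z^{y}=\frac{1}{1+\lambda-\lambda z}\,\exp\!\left(\frac{x(z-1)}{1+\lambda-\lambda z}\right).
\label{laggen}
\end{equation}
Here $(1+\lambda-\lambda z)^{-1}$ is the generating function of a geometric law of mean $\lambda$, while the remaining exponential factor is $e^{x(\Psi_{2}(z)-1)}$ with
\begin{equation}
\Psi_{2}(z)=1+\frac{z-1}{1+\lambda-\lambda z}=\frac{\lambda+(1-\lambda)z}{1+\lambda-\lambda z}.
\label{psi2}
\end{equation}

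Next I would check that $\Psi_{2}$ is a bona fide generating function on $\mathbb{Z}^{+}$: expanding (\ref{psi2}) in powers of $z$ gives the mass function $p_{2}(0\,|\,1)=\frac{\lambda}{1+\lambda}$ and $p_{2}(k\,|\,1)=\frac{1}{(1+\lambda)^{2}}\bigl(\frac{\lambda}{1+\lambda}\bigr)^{k-1}$ for $k\ge 1$, which are nonnegative and sum to $1$ for \emph{every} $\lambda>0$; this is precisely where the hypothesis ``arbitrary $\lambda$'' enters. I then define $p_{2}(\cdot\,|\,x)$ to be the $x$-fold convolution of $p_{2}(\cdot\,|\,1)$, so that the hypothesis of Lemma~2 is met by construction, and take the first channel to be the dark-current-free Poisson channel $p_{1}(x\,|\,a)=e^{-a}a^{x}/x!$, whose generating function is $\Psi_{1}(w)=e^{a(w-1)}$. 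Lemma~2 then gives $\Psi_{\mathrm{tot}}(z)=\Psi_{1}(\Psi_{2}(z))=e^{a(\Psi_{2}(z)-1)}$ for the cascade $A\to X\to Y$.

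The last step is to attach to $Y$ an independent geometric noise of mean $\lambda$, which is exactly the construction of Fig.~\ref{Fig8}: the resulting output $Z$ has generating function $\Psi_{Z}(z)=\Psi_{\mathrm{tot}}(z)\,(1+\lambda-\lambda z)^{-1}$, and by (\ref{laggen}) this equals the generating function of $W(\cdot\,|\,a)$. Since a generating function determines a distribution on $\mathbb{Z}^{+}$ uniquely, the conditional law of $Z$ given $a$ coincides with the Laguerre kernel, so $A\to X\to Y\to Z$ is a Markov chain exhibiting the Laguerre channel as a degradation of the Poisson channel with no dark current; the data-processing inequalities already recorded above, $I(A;Z)\le I(A;Y)\le I(A;X)$, then carry every upper bound valid for the $\lambda_{0}=0$ Poisson channel --- in particular the asymptotic bound invoked earlier in this section --- over to the Laguerre channel, which is what the upper-bound parts of the theorems in Section~III need. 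I expect the crux to be the first step: deriving the closed form (\ref{laggen}) from the series in (\ref{laguerre1}), together with the accompanying observation that the geometric factor cannot be folded into the Poisson mean and must therefore be peeled off as additive output noise, after which one still has to verify that the residual kernel $\Psi_{2}$ in (\ref{psi2}) has nonnegative power-series coefficients.
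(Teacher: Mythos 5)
Your proposal is correct and follows essentially the same route as the paper: you factor the Laguerre generating function as $\Psi_{\mathrm{Poisson}}(\Psi_{v_1}(z))\cdot\Psi_{v_2}(z)$, where your $\Psi_2(z)=\frac{\lambda+(1-\lambda)z}{1+\lambda-\lambda z}$ is exactly the paper's Birth--Death factor $\Psi_{v_1}$ and your independent geometric noise is the paper's Bose--Einstein factor $\Psi_{v_2}$, then invoke Lemma~2, the Fig.~8 construction, and data processing. The only difference is that you supply details the paper leaves implicit (the closed-form Laguerre PGF and the nonnegativity of the coefficients of $\Psi_2$), which strengthens rather than changes the argument.
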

\begin{proof}
Regarding to lemma 2, it is sufficient to show that Laguerre MGF can be expressed in terms of Poisson MGF. Considering $v_1$ and $v_2$ as a Birth-Death and Bose-Einstein process, we have
\begin{equation}
\Psi_{v_1}(z)=\frac{z+\lambda(1-z)}{1+\lambda(1-z)}
\end{equation}
and
\begin{equation}
\Psi_{v_2}(z)=\frac{1}{1+\lambda(1-z)},
\end{equation}
and it can be easily shown that
\begin{equation}
\Psi_{Laguerre}(z)=\Psi_{Poisson}(\Psi_{v_1}(z)).\Psi_{v_2}(z)
\end{equation}
The reminder of the proof is straightforward by equations (34) and (36). Therefore, we have
\begin{equation}
\mathcal{C}_{Laguerre}(A,\mathcal{E}) \leq \mathcal{C}_{Poisson}(A,\mathcal{E}) \leq {U}_{Poisson}(A,\mathcal{E})
\end{equation}
where $U_{Poisson}$ is the upper-bounds acquired in [6].
\end{proof}
\appendix


\section{}
In this Appendix, we prove that when the input of Laguerre channel is negative exponential distribution, the corresponding output distribution is Bose-Einstein distribution. By substituting We have

\begin{subequations} \label{eq:energy13}
\begin{align}
&R_G(y)\nonumber\\
&=\int_{0}^{\infty}W(y|x)Q_E(x)dx\\
&=\int_{0}^{\infty}\frac{e^{\frac{-x}{\lambda}}}{1+\lambda}\left(\frac{\lambda}{1+\lambda} \right)^y \nonumber \\
&\qquad \times  \left( \sum_{j=0}^{\infty}\left( \frac{x}{\lambda (1+\lambda)} \right)^j \frac{(y+j)!}{(j!)^2 y!} \right)\frac{1}{\eta}e^{\frac{-x}{\eta}}dx\\
&=\frac{1}{\eta(1+\lambda)}\left( \frac{\lambda}{1+\lambda} \right)^y \left( \sum_{j=0}^{\infty} \frac{(y+j)!}{(j!)^2 (y!)} \right)\nonumber \\
&\qquad \times\left( \int_{0}^{\infty}\left( \frac{x}{\lambda(1+\lambda)} \right)^j e^{-x(\frac{1}{\eta}+\frac{1}{\lambda})} dx\right)\\
&=\frac{1}{\eta(1+\lambda)}\left( \frac{\lambda}{1+\lambda} \right)^y \left( \sum_{j=0}^{\infty} \frac{(y+j)!}{(j!)^2 (y!)} \right)\nonumber \\
&\qquad \times \left( j! \left( \frac{1}{\lambda(1+\lambda)} \right)^j \left( \frac{\lambda \eta}{\lambda + \eta} \right) ^{j+1} \right) \label{sub-eq4} \\
&=\frac{1}{\eta(1+\lambda)}\left( \frac{\lambda}{1+\lambda} \right)^y \left( \sum_{j=0}^{\infty} \frac{(y+j)!}{(j!) (y!)} \right)\nonumber \\
&\qquad \times\left( \frac{\eta}{(1+ \lambda)(\lambda +\eta)} \right)^j \left( \frac{\lambda \eta}{\lambda+\eta} \right)\\
&=\frac{1}{\eta(1+\lambda)}\left( \frac{\lambda}{1+\lambda} \right)^y  \left( \frac{\lambda \eta}{\lambda+\eta} \right)\nonumber \\
&\qquad \times \left( \sum_{j=0}^{\infty} \binom{y+j}{j} \left( \frac{\eta}{(1+ \lambda)(\lambda +\eta)} \right)^j \right)\\
&=\frac{1}{\eta(1+\lambda)}\left( \frac{\lambda}{1+\lambda} \right)^y  \left( \frac{\lambda \eta}{\lambda+\eta} \right) \left( \frac{(1+\lambda)(\lambda +\eta)}{\lambda (1+\lambda +\eta)} \right)^{y+1} \label{sub-eqg}\\
&=\frac{1}{1+\lambda +\eta}\left( \frac{\lambda}{\lambda +1}\right)^{y+1} \left( \frac{\lambda+1}{\lambda}\right)^{y+1} \left(\frac{\eta +\lambda}{1+\lambda +\eta} \right)^y\\
&=\frac{1}{1+\lambda +\eta} \left(\frac{\eta +\lambda}{1+\lambda +\eta} \right)^y
\end{align}
\end{subequations}
Where (42d) follows from the fact that
\begin{equation}
\int_{0}^{\infty}(ax)^ke^{-bx}=\frac{(a)^k (b)^{-k} \Gamma(k+1)}{b}=k!\frac{a^k}{b^{k+1}}
\end{equation}
and (42g) can be obtain by using
\begin{equation}
\sum^{\infty}_{i=0}\binom{b+i}{i}a^i=\left(\frac{1}{1-a}\right)^{b+1}
\end{equation}
Thus, the proof is complete.

Similarly, it is easy to show that for optical coherent CDMA network by choosing input 
distribution $Q_E(x)=\frac{1}{\eta}e^{\frac{-x}{\eta}}$, the corresponding output distribution will 
be $R_G(y)=\frac{1}{\frac{1}{M}+\lambda +\eta} \left(\frac{\eta +\lambda}{\frac{1}{M}+\lambda +\eta} \right)^y$.


\begin{thebibliography}{}

\bibitem{salehiultrashort}
A. Lapidoth, J. H. Shapiro, V. Venkatesan, and L. Wang, ``The Poisson
channel at low input powers,” in {\em Proc. 25th IEEE Conv. Electrical and Electronics Engineers in Isreal (IEEEI)}, Eilat, Israel, Dec. 2008, pp.
654–658.

\bibitem{salehiultrashort}
V. Venkatesan, ``
On low power capacity of the Poisson channel,”Master’s thesis, Signal and Information Processing Lab., ETH Zurich,
Zurich, Switzerland, Apr. 2008, supervised by Prof. Dr. Amos Lapidoth.

\bibitem{salehiultrashort}
D. Brady and S. Verdú, ``The asymptotic capacity of the direct detection
photon channel with a bandwidth constraint,” in {\em Proc. 28th Allerton
Conf. Communication, Control and Computing}, Allerton House, Monticello, IL, Oct. 1990, pp. 691–700.

\bibitem{salehiultrashort}
A. Martinez, ``Spectral efficiency of optical direct detection,” {\em J.Opt.
Soc. America B}, vol. 24, no. 4, pp. 739–749, Apr. 2007.

\bibitem{lapidoth2}
A. Lapidoth and S. M. Moser, ``On the capacity of the discrete-time
Poisson Channel,” {\em IEEE Trans. Inf. Theory}, vol. 55, no. 1, pp. 303-322,
Jan. 2009.

\bibitem{z2}
T. H. Chan, S. Hranilovic, and F. R. Kschischang, ``Capacity-achieving
probability measure for conditionally Gaussian channels with bounded
inputs,"\textit{IEEE Trans. Inform. Theory}, vol. 51, no. 6, pp. 2073–2088,
Jun. 2005.


\bibitem{z7}
R. You and J. M. Kahn, ``Upper-bounding the capacity of optical
IM/DD channels with multiple-subcarrier modulation and fixed bias
using trigonometric momentspace method," \textit{ IEEE Trans. Inform.
Theory}, vol. 48, no. 2, pp. 514–523, Feb. 2002.

\bibitem{z5}
S. Hranilovic and F. R. Kschischang, ``Capacity bounds for power- and
band-limited optical intensity channels corrupted by Gaussian noise,"
\textit{IEEE Trans. Inform. Theory}, vol. 50, no. 5, pp. 784–795, May 2004.

\bibitem{z6}
A. Farid and S. Hranilovic, ``Capacity bounds for wireless optical intensity
channels with Gaussian noise,"\textit{IEEE Trans. Inf. Theory}, vol. 56, no. 12,
pp. 6066–6077, Dec. 2010.


\bibitem{z8}
A. Lapidoth, S. M. Moser, and M. A. Wigger, ``On the capacity of freespace optical intensity channels,"in \textit{Proc. IEEE Int. Symp. Information Theory}, Toronto, ON, Canada, 2008, pp. 2419–2423.

\bibitem{salehiultrashort}
S. Moser, ``Capacity results of an optical intensity channel with inputdependent Gaussian noise,” {\em Information Theory, IEEE Transactions on}, vol. 58, no. 1, pp. 207–223, Jan. 2012.

\bibitem{z0}
Sherman Karp , Robert M. Gagliardi , Steven E. Moran , Lawrence B. Stotts,  \textit{ Optical Channels: Fibers, Clouds, Water, and the Atmosphere (Applications of Communications Theory)}, Springer, 1988.



\bibitem{payanamemoser}
R. Fields, C. Lunde, R. Wong, J. Wicker, J. Jordan, B. Hansen,
G. Muehlnikel, W. Scheel, U. Sterr, R. Kahle, and R. Meyer`` NFIRE-to-TerraSAR-X laser communication results: satellite pointing, disturbances, and other attributes consistent
with successful performance,”  Proc. SPIE, vol. 7330, 73300Q,
2009.



\bibitem{z1}
T. M. Cover and J. A. Thomas, \textit{Elements of Informatin Theory}, 2nd ed. John Wiley and Sons, 2006.
\bibitem{z0}

C. E. Shannon, ``A mathematicall theory of communication," \textit{ Bell System Techn. J.}, vol. 27, pp. 379-423 and 623-656, Jul. and Oct. 1948.

\bibitem{payanamemoser}
S. M. Moser, “Duality-based bounds on channel capacity,” Ph.D.
dissertation, Swiss Federal Institute of Technology (ETH), Zurich,
Switzerland, Oct. 2004

\bibitem{salehiultrashort}
J. A. Salehi, A. M.Weiner, and J. P. Heritage, “Coherent ultrashort light
pulse code-division multiple access communication systems,” {\em J. Lightwave
Technol.}, vol. 7, pp. 478–491, Mar. 1990.


\end{thebibliography}
\end{document}